\def\be{\begin{eqnarray}}
\def\ee{\end{eqnarray}}
\def\bee{\begin{eqnarray*}}
\def\eee{\end{eqnarray*}}
\newcommand{\hide}[1]{}
\newcommand{\norm}[1]{\|#1\|}
\def \bi   {\begin{itemize}}
\def \ei   {\end{itemize}}
\def \id   {\mathbbm{1}}
\newcommand{\ket}[1]{|\hspace{0.5pt}#1\hspace{0.5pt}\rangle}
\newcommand{\bra}[1]{\langle\hspace{0.5pt}#1\hspace{0.5pt}|}
\newcommand{\bracket}[2]{\langle\hspace{0.5pt}#1\hspace{0.5pt}|\hspace{0.5pt}
	#2\hspace{0.5pt}\rangle}
\newcounter{defctr}
\newcounter{examplectr}
\newtheorem{theorem}{Theorem}
\newtheorem{lemma}[theorem]{Lemma}
\newtheorem{definition}[defctr]{Definition}
\newtheorem{example}[examplectr]{Example}
\begin{document}

\title{Simulating Quantum Circuits with Sparse Output Distributions}
\author[1]{Martin Schwarz\thanks{m.schwarz@univie.ac.at}}
\author[2]{Maarten Van den Nest\thanks{maarten.vandennest@mpq.mpg.de}}
\affil[1]{\normalsize Vienna Center for Quantum Science and Technology, Faculty of Physics, University of Vienna, Austria}
\affil[2]{\normalsize Max Planck Institut f\"{u}r Quantenoptik, Hans-Kopfermann-Str. 1, D-85748 Garching, Germany}

\renewcommand\Authands{\hspace{1cm}}
\date{}

\maketitle

\begin{abstract}
We show that several quantum circuit families can be simulated efficiently classically if it is promised that their output distribution is approximately sparse i.e. the distribution is close to one where only a polynomially small, a priori unknown subset of the measurement probabilities are nonzero. Classical simulations are thereby obtained for quantum circuits which---without the additional sparsity promise---are considered hard to simulate. Our results apply in particular to a family of Fourier sampling circuits (which have structural similarities to Shor's factoring algorithm) but also to several other circuit families, such as IQP circuits. Our results provide examples of quantum circuits that cannot achieve exponential speed-ups due to the presence of too much destructive interference i.e. too many cancelations of amplitudes. The crux of our classical simulation is an efficient algorithm for approximating the significant Fourier coefficients of a class of states called computationally tractable states. The latter result may have applications beyond the scope of this work. In the proof we employ and extend sparse approximation techniques, in particular the Kushilevitz-Mansour algorithm, in combination with probabilistic simulation methods for quantum circuits.
\end{abstract}

\bigskip

\section{Introduction}\label{intro}
In this paper we present classical algorithms for the simulation of several related classes of quantum circuits containing blocks of Quantum Fourier Transforms (QFTs). In particular, we consider $n$-qubit circuits with a QFT-Toffoli-QFT$^{-1}$ block structure followed by a (partial)  measurement immediately after the final QFT. Circuits of this kind are used in various quantum algorithms, most notably Shor's factoring algorithm. Whereas the circuits considered in this paper are unlikely to have an efficient classical simulation in general, the aim of this work is to analyze under which additional conditions an efficient classical simulation becomes possible. This provides an approach to identify features which are essential in the (believed) superpolynomial speed-ups achieved by, say, the factoring algorithm. In this paper we will  in particular place restrictions on the \emph{output distribution} of the circuit. In short,  our results are as follows: given the promise that the output distribution is \emph{approximately sparse} (or ``\emph{peaked}''`)---in the sense that only $O(\mbox{poly}(n))$ of the $O(2^n)$ probabilities have significant magnitude of $\Omega(1/\mbox{poly}(n))$---then an efficient classical simulation algorithm is provided.
Not unexpectedly, Shor's algorithm does not satisfy such sparseness promise i.e. its output distribution is ``superpolynomially flat''. Our results thus imply that the approximate sparseness promise alone suffices to bring down the (believed) superpolynomial speed-up achieved by the factoring algorithm to the realm of a classically simulatable quantum computation. Below we provide a discussion of how our findings shed light on the factoring algorithm (see \Cref{sect_statement}).

The implications of our results are twofold. First, they pose restrictions on the design of fast quantum algorithms. For example, our results show that any \emph{exact} quantum algorithm adopting the QFT-Toffoli-QFT$^{-1}$ block structure (or more generally the structures considered in Theorems \ref{thm_main1}-\ref{thm_main4}) which has as its output state a single computational basis state containing the answer of the problem, can never achieve an exponential quantum speed-up. Given the generality of the class of circuits considered, we believe that these classical simulation results may provide useful insights for the quantum algorithms community. Second, the present results have conceptual implications as follows: the exponential speed-up found in quantum algorithms is often related to the availability of interference of probability amplitudes in this model. Indeed in several quantum algorithms, first a superposition of states is created using a QFT, then amplitudes are manipulated in some nontrivial way using reversible (classical) gates, such that in a final QFT, by means of interference, only desired basis states survive whereas the amplitudes for undesired states cancel out. Our results imply that this qualitative picture has to be refined, since too much cancelation leading to only a few classical output states (let alone a single one!) can in fact be simulated efficiently classically, and thus cannot offer exponential speed-up. Indeed, our results imply that the final probability distribution must \emph{necessarily have super-polynomially large support} (e.g. in the same order as the full state space), in order to allow for exponential speed-up. Finally, since only polynomially many measurements can be performed efficiently on the output state---and thus only a small fraction of the necessarily large number of states can be sampled---the output distribution must have a special structure such that meaningful information can be recovered from just a few measurements. Notably, the coset state produced by Shor's algorithm (and its generalizations) has group structure which is indeed exploited in the classical post-processing step to recover the entire state space from just a few measurements (cf. \Cref{sect_statement}).

The proof techniques we use to obtain our results are twofold. First, we use randomized classical simulation methods for Computationally Tractable (CT) states as developed in \cite{vdN11}. Furthermore the latter methods are combined with algorithms for sublinear sparse Fourier transforms (SFTs), which have been pioneered in seminal work by Goldreich-Levin \cite{GL89} and Kushilevitz-Mansour \cite{KM91} and which have been refined throughout the last two decades \cite{mansour95,GGIMS02,AGS03,GMS05,AGGM06,iwen10,akavia10,HIKP12,HIKP12b}. Our work also provides further extensions of the above sparse approximation techniques.

Whereas to our knowledge this is the first paper which analyzes the effect of  (approximate) sparseness of the output distribution on the classical simulability of quantum circuits, from a more general point of view several works are related to the present paper (e.g. in terms of the class of quantum circuits considered or in terms of the techniques used). For example, a relevant series of papers regards \cite{YS07, ALM06, brown07}, that all focus on efficient classical simulation of the QFT with the aim of understanding better the workings of Shor's factoring algorithm. In the latter context, see also \cite{vdN12, BVN12} for classical simulations of a class of circuits involving QFTs over finite abelian groups supplemented with a particular family of group-theoretic operations (Normalizer circuits).
Classical simulation of CT states were considered in \cite{vdN11} by one of us. In the latter work,  the algorithms from Goldreich-Levin \cite{GL89} and Kushilevitz-Mansour \cite{KM91} were applied in the context of classical simulation, albeit in a rather different context compared to the present paper, namely to analyze the role of the classical postprocessing for quantum speed-ups (more particularly in Simon's algorithm).  Further work on CT states is done in \cite{stahlke13}; the latter work also analyzes the role of interference effects in quantum speed-ups (although from a different perspective then the present paper).  Below we will also make statements about classical simulability of IQP (Instanteneous Quantum Polynomial-time) circuits. In \cite{BJS11} it was shown (roughly speaking) that general IQP circuits cannot be simulated efficiently, unless the polynomial hierarchy collapses. In contrast, here we show that IQP circuits with an additional sparseness promise on the output distribution, are efficiently simulable classically. Finally, in \cite{MO10} the authors consider and generalize prior work on SFTs in a different direction i.e. unrelated to classical simulation issues; they prove a quantum Goldreich-Levin theorem and use it for efficient quantum state tomography for quantum states that are approximately sparse in the Pauli product operator basis.

\section{Main results: statements and discussion}\label{sect_statement}

We prove four theorems, all similar in spirit,  about  efficient classical simulability of classes of quantum circuits with a promise on the (approximate) sparseness of the output distributions and/or the output states.
We call a probability distribution over $2^n$ events \emph{$t$-sparse}, if only $t$ probabilities are nonzero, and \emph{$\varepsilon$-approximately $t$-sparse} if the probability distribution is $\varepsilon$-close in $\ell_1$-distance to a $t$-sparse one.
Throughout this paper we will work with qubit systems and sometimes indicate where generalizations of definitions and results to $d$-level systems are possible. The computational basis states of an $n$-qubit system are denoted by $|x\rangle$ where $x=x_1\cdots x_n$ is an bit string. The set of $n$-bit strings will be denoted by $B_n$.

A key concept we build upon in this work are \emph{computationally tractable} states introduced in \cite{vdN11}, which capture two key properties of simulable quantum states:
\begin{definition}[Computationally Tractable (CT) states] \label{def:ct}
An $n$-qubit state $\ket{\psi}$ is called `computationally tractable' (CT) if the following conditions hold:
\begin{enumerate}
\item it is possible to sample in poly$(n)$ time with classical means from the probability distribution ${\cal P}=\{p_x: x\in B_n\}$ defined by \label{def:ctsample}
$p_x=|\bracket{x}{\psi}|^2$, and
\item upon input of any bit string $x$, the coefficient $\bracket{x}{\psi}$ can be computed in $poly(n)$ time on a classical computer. \label{def:cteval}
\end{enumerate}
\end{definition}
The definition of CT states is straightforwardly generalized to states of systems of qudits. Several important state families are CT: matrix product states with polynomial bond dimension, states generated by poly-size Clifford circuits, states generated by poly-size nearest-neighbor matchgate circuits, states generated by bounded tree-width circuits (where all aforementioned circuits act on standard basis inputs). For definitions of these classes and proofs that they are CT states, we refer to \cite{vdN11}. Further examples of CT states are states generated by normalizer circuits over finite Abelian groups (acting on coset states) \cite{vdN12, BVN12}.
\begin{example} \label{example1}
For our purposes it will be especially useful to point out that the following classes of states are CT \cite{vdN11}.
\begin{itemize}
\item[(i)]
Let $|x\rangle$ be an arbitrary $n$-qubit computational basis state, let ${\cal F}$ denote the quantum Fourier transform over $\mathbbm{Z}_{2^k}$ for some $k\leq n$ (acting on any subset of $k$ qubits) and let ${\cal T}$ be a poly-size circuit of classical reversible gates (e.g. Toffoli gates), then the state ${\cal T}{\cal F}|x\rangle$ is CT.
\item[(ii)]
Let $f:B_n\to \{1, -1\}$ be a classically  efficiently computable function, then the state
$|\psi_f\rangle = \frac{1}{\sqrt{2^n}} \sum {f(x)}|x\rangle$, where the sum is over all $n$-bit strings $x$, is CT.
\end{itemize}
\end{example}
One may also consider a notion of CT states in the presence of oracles (see also \cite{BH13}). We say that an $n$-qubit state $|\psi\rangle$ is $f$-CT given access to an oracle $f:\{0, 1\}^m\to \{0 ,1\}$ (with $m=$ poly$(n)$) if conditions (a)-(b) in \Cref{def:ct} hold when allowing, instead of poly-time classical computations, poly-many queries to the oracle. For example, if  the function $f$ in (ii) is given as an oracle, the state $|\psi_f\rangle$ in \Cref{example1} is trivially $f$-CT.

Based on these definitions, we are now ready to state our main results.
\subsection*{Sparse output distributions}

\begin{theorem}\label{thm_main1}
Consider a unitary $n$-qubit quantum circuit composed of two blocks ${\cal C}= U_2U_1$ with input state $|\psi_{\mbox{\scriptsize{in}}}\rangle$. Suppose that the following conditions are fulfilled:
\begin{itemize}
\item[(a)] the state $U_1|\psi_{\mbox{\scriptsize{in}}}\rangle$ obtained after applying the first block is CT;
\item[(b)] the second block $U_2$ is a QFT (or QFT$^{-1}$) modulo $2^k$, for some $k\leq n$, applied to any subset $S$ of $k$ qubits. The circuit is followed by a measurement of the qubits in $S$ in the computational basis, giving rise to a probability distribution ${\cal P}$.
\item[(c)]   The distribution ${\cal P}$ is promised to be $\varepsilon$-approximately $t$-sparse for some $\varepsilon\leq 1/6$ and for some $t$ (and otherwise no information about ${\cal P}$ is available).
\end{itemize}
Then there exists a randomized classical algorithm with runtime poly$(n, t, 1/\varepsilon, \log{\frac{1}{\delta}})$ which outputs (by means of listing all nonzero probabilities) an $s$-sparse probability distribution ${\cal P}'$ where $s=O(t/\varepsilon)$; with probability at least $1-\delta$, the distribution ${\cal P}'$ is $O(\varepsilon)$-close to ${\cal P}$.  Furthermore, it is possible to sample ${\cal P}'$ on a classical computer in poly$(n, t, 1/\varepsilon)$ time.
\end{theorem}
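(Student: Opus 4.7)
The plan is to identify the heavy outcomes of $\mathcal{P}$ by a sparse Fourier transform (SFT) applied to the amplitudes of the CT state $|\phi\rangle:=U_1|\psi_{\mbox{\scriptsize in}}\rangle$. Partition the $n$ qubits as $S\sqcup\bar S$ with $|S|=k$ and expand $|\phi\rangle=\sum_{x_S,y}\alpha_{x_S,y}|x_S\rangle|y\rangle$. The measurement outcome $k_S$ has probability
\[
p_{k_S}=\sum_y|\hat\alpha_{k_S,y}|^2,\qquad \hat\alpha_{k_S,y}=\tfrac{1}{\sqrt{2^k}}\sum_{x_S}\omega^{k_S x_S}\alpha_{x_S,y},
\]
i.e.\ the $\bar S$-marginal of the squared QFT (over $\mathbb{Z}_{2^k}$) of $\alpha_{\cdot,y}$. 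Condition (a) supplies poly$(n)$-time query access to $\alpha_{x_S,y}$ and sample access to the distribution $|\alpha_{x_S,y}|^2$; condition (c) says that at most $t$ of the $p_{k_S}$ exceed the threshold $\theta:=c\varepsilon/t$ for a suitable constant $c$.

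The key step is an SFT subroutine for CT states: using only the oracle access from (a), return in time poly$(n,1/\theta,\log(1/\delta))$ a list $L\subset\mathbb{Z}_{2^k}$ of size $O(1/\theta)$ together with estimates $\tilde p_{k_S}$, such that with probability $1-\delta$ every $k_S$ with $p_{k_S}\geq\theta$ lies in $L$, no $k_S\in L$ has $p_{k_S}<\theta/2$, and $|\tilde p_{k_S}-p_{k_S}|$ can be driven to any desired polynomial accuracy. I would adapt the Kushilevitz--Mansour / Akavia--Goldwasser--Safra cyclic SFT: traverse a binary tree of dyadic sets $A\subseteq\mathbb{Z}_{2^k}$, at each node estimating the cylindrical mass $M(A)=\sum_{k_S\in A}p_{k_S}$ by Monte Carlo and pruning whenever $M(A)<\theta/2$. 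The estimator combines the standard KM Parseval-type identity, which expresses $\sum_{k_S\in A}|\hat\alpha_{k_S,y}|^2$ as an expectation over $\alpha_{\cdot,y}$-values, with CT sampling of the $\bar S$-marginal $Q_Y(y)=\sum_{x_S}|\alpha_{x_S,y}|^2$ to absorb the $y$-sum. The Parseval bound $|\hat\alpha_{k_S,y}|^2\leq Q_Y(y)$ keeps the estimator variance bounded by a constant, while the global identity $\sum_{k_S}p_{k_S}=1$ caps the number of surviving nodes at any depth by $O(1/\theta)$, so the whole search runs in polynomial time.

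Given the subroutine, set $\theta=\Theta(\varepsilon/t)$ and let $L$ be the output list, so $|L|=O(t/\varepsilon)=:s$. Refine each $\tilde p_{k_S}$ to additive accuracy $\eta=\Theta(\varepsilon/s)=\Theta(\varepsilon^2/t)$ via the same Monte Carlo estimator, and let $\mathcal{P}'$ be the distribution on $L$ with probabilities $\tilde p_{k_S}$ (truncated and renormalised to a valid distribution). The approximate-sparseness promise implies that the restriction of $\mathcal{P}$ to any set containing all $k_S$ with $p_{k_S}\geq\theta$ captures all but $O(\varepsilon)$ of the mass: any $\varepsilon$-close $t$-sparse distribution places at most $\varepsilon+t\theta=O(\varepsilon)$ of its weight outside such a set. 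Combined with the $s\cdot\eta=O(\varepsilon)$ estimation error, this gives $\|\mathcal{P}'-\mathcal{P}\|_1=O(\varepsilon)$. Sampling from $\mathcal{P}'$ is immediate from its explicit list of $O(t/\varepsilon)$ probabilities.

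The main obstacle is the marginal feature: off-the-shelf KM returns heavy Fourier coefficients of a single scalar function, whereas here $p_{k_S}=\sum_y|\hat\alpha_{k_S,y}|^2$ is already averaged over the unobserved register $\bar S$. Handling it cleanly requires rewriting the KM cylindrical-mass estimator so that the $y$-sum is absorbed into a single CT-sample-driven Monte Carlo average, and then verifying---via the $|\hat\alpha_{k_S,y}|^2\leq Q_Y(y)$ bound---that this estimator has variance small enough for Chernoff/Hoeffding to deliver the required additive accuracy at every node of the search tree with poly$(n,t,1/\varepsilon,\log(1/\delta))$ queries. Once this CT-SFT core is in place, the remaining pieces---dyadic pruning, $\ell_1$-error propagation, truncation/renormalisation, and sampling---follow standard SFT technology.
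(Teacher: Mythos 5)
Your overall architecture is the same as the paper's: a Kushilevitz--Mansour-style tree search with threshold $\theta=\Theta(\varepsilon/t)$, pruning by bucket-mass estimates, a list of size $O(1/\theta)=O(t/\varepsilon)$, refinement of the surviving probabilities to accuracy $\Theta(\varepsilon^2/t)$, truncation and renormalization, an error budget $\varepsilon+t\theta+s\eta=O(\varepsilon)$ (this is exactly \Cref{lem:largecoeffs_prob} combined with \Cref{thm_central}), and sampling from the explicit sparse list. The problem is that the one step you yourself flag as ``the main obstacle''---estimating the bucket masses $M(A)=\sum_{k_S\in A}p_{k_S}$ using only CT access---is precisely the technical content of the theorem (the paper's \Cref{thm_firstprop}), and your sketch of it does not work as stated. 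For a sampled $y$, the per-sample quantity $\sum_{k_S\in A}|\hat\alpha_{k_S,y}|^2$ is not evaluable: it is a sum over exponentially many $k_S$, each Fourier coefficient itself being an exponential sum over $x_S$; and $Q_Y(y)=\sum_{x_S}|\alpha_{x_S,y}|^2$ is not queryable either, since CT query access gives amplitudes of the full $n$-qubit state, not marginal probabilities of a subregister. So the Parseval bound $|\hat\alpha_{k_S,y}|^2\leq Q_Y(y)$ does not by itself yield a bounded, computable Monte Carlo estimator; a second estimation layer is required, and establishing its boundedness is where all the work lies.

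The paper closes this gap algebraically rather than by a variance argument. It chooses the buckets to be residue classes modulo $2^m$ (equivalently, bit-prefixes of the measured string in least-significant-bit-first labeling), so that the projector onto a bucket is an average of $2^m$ powers of a phase times $Z_{2^k}^{2^{k-m}}$; conjugating through the QFT via ${\cal F}^\dagger Z_{2^k}{\cal F}=X_{2^k}$ turns this into an average of \emph{basis-preserving} shift operators $N^u$. Hence each bucket mass is a uniform average over $u$ of overlaps $\bra{\mbox{CT}}N^u\otimes I\ket{\mbox{CT}}$, each of which is additively estimable by the CT importance-sampling estimator of \Cref{thm:overlap} (bounded per sample because one divides by the larger of the two sampled probabilities), and the outer average over $u$ is handled by \Cref{thm_approximate_F}; the prefix marginals are then additively approximable and \Cref{thm_compute_large_coeff} applies. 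This is the estimator your sketch is missing, and note that it also dictates the KM prefix/residue-class bucket geometry: AGS-style dyadic intervals do not decompose into an average of basis-preserving operators and would require additional filtering machinery that your proposal does not supply.
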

Thus, if the sparseness $t$ is at most polynomially large in $n$,  if the error $\varepsilon$ is at worst polynomially small in $n$, and if $\delta= 2^{-\mbox{\scriptsize{poly(n)}}}$, then the classical simulation is efficient i.e. it runs in poly$(n)$ time, and the probability of failure is exponentially small.

We emphasize that, apart from the promise (c), no information about the structure of ${\cal P}$ is a priori available. For example, suppose that ${\cal P}$ is promised to be approximately 1-sparse, where a distribution is 1-sparse if there exists a single bit string $x^*$ which occurs with probability 1 and all other bit strings have probability 0. Then, crucially, we do not assume  knowledge of the bit string $x^*$, i.e a priori all (potentially exponentially many in $n$!) bit strings are equally likely. Perhaps surprisingly, \Cref{thm_main1} implies that a good approximation of ${\cal P}$ can nevertheless be efficiently computed.

\begin{figure}
  \begin{center}
    \includegraphics[scale=1]{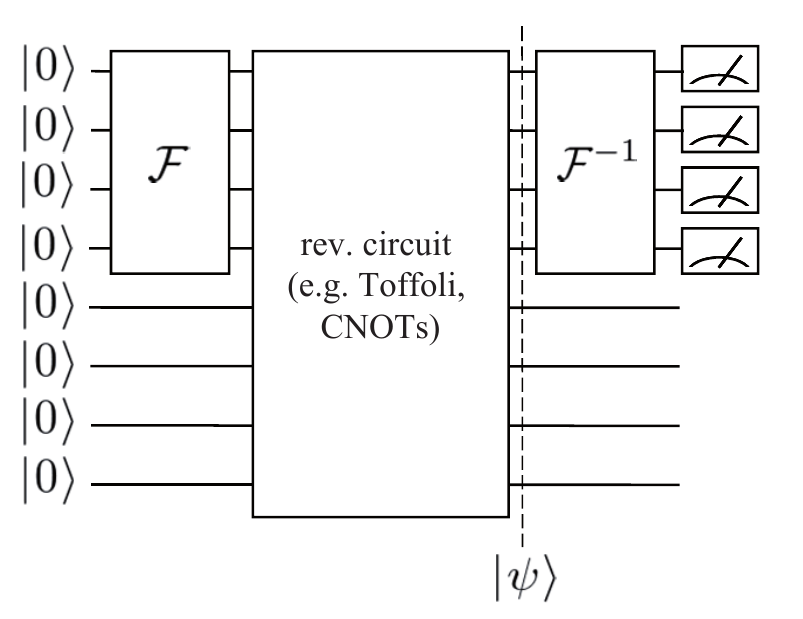}
    \caption{\emph{Shor's algorithm \cite{shor99} consists of (1) a quantum Fourier transform (QFT) on a subset of qubits, (2) a block of reversible gates (a
		modular exponentiation circuit), and (3) an inverse QFT on the same subset qubits.
		Note that the state $|\psi\rangle$ obtained after the first QFT is a computationally tractable (CT) state. Thus conditions (a) and (b) of \Cref{thm_main1} are satisfied. However
		the output distribution of Shor's algorithm is \emph{not} sparse in general, as required by our algorithm (cf. condition (c)).}}
    \label{fig:ftf-circuit}
  \end{center}
\end{figure}

Since several circuit families satisfy condition (a) (recall examples above and see \cite{vdN11}), \Cref{thm_main1} yields an efficient classical simulation of various types of circuits. For example, letting $|\psi_{\mbox{\scriptsize{in}}}\rangle$ be an arbitrary computational basis input, the block $U_1$ may be e.g. any poly-size Clifford circuit, nearest-neighbor matchgate circuit or bounded-treewidth circuit. A particularly interesting class of circuits, denoted by  ${\cal A}_{\mbox{\scriptsize{Shor}}}$, is depicted in \Cref{fig:ftf-circuit}. Note that Shor's factoring algorithm belongs to the class ${\cal A}_{\mbox{\scriptsize{Shor}}}$. It is easily verified that, for any ${\cal A}_{\mbox{\scriptsize{Shor}}}$ circuit, the state of the quantum register immediately before the second QFT is CT (recall \Cref{example1} (i) from above). Thus any circuit in ${\cal A}_{\mbox{\scriptsize{Shor}}}$ \emph{which, in addition, satisfies the sparseness condition (c) of \Cref{thm_main1}} can be simulated efficiently classically. Upon closer inspection of Shor's factoring algorithm, one finds that its output distribution ${\cal P}_{\mbox{\scriptsize{Shor}}}$ generally contains super-polynomially many nonzero probabilities and thus (non-surprisingly) \Cref{thm_main1} does not yield an efficient classical simulation of the factoring algorithm. More precisely, the size of the support of the flat distribution ${\cal P}_{\mbox{\scriptsize{Shor}}}$ equals the multiplicative order $r$ of a randomly chosen integer $x$ modulo $N$. For a general integer $N$, the order is conjectured to be $\Omega(N/\log(N))$ on average over all $N$ \cite{arnold05,KP13}. In the case of RSA, with $N=pq$, the primes $p$ and $q$ might be \emph{chosen} such that w.h.p. $r \approx N/4$ \cite{shorCST11}. Nevertheless it is interesting that the mere promise of (approximate) sparsity of the output distribution suffices to arrive at an efficient classical simulation for all ${\cal A}_{\mbox{\scriptsize{Shor}}}$ circuits, without otherwise restricting the allowed operations. This implies that the feature that ${\cal P}_{\mbox{\scriptsize{Shor}}}$ is sufficiently flat is an essential ingredient in the (believed) superpolynomial speed-up achieved by Shor's factoring algorithm.

Another observation is the following. Any quantum circuit ${\cal A}$ satisfying (a)-(b) in \Cref{thm_main1} (for example any ${\cal A}_{\mbox{\scriptsize{Shor}}}$ circuit) which, when implemented on a quantum computer, aspires to deliver a superpolynomial speed-up over classical computers, must generate a distribution ${\cal P}$ which cannot be well-approximated by a poly$(n)$-sparse distribution. At the same time, at most poly$(n)$ repetitions of ${\cal A}$ are allowed if the total computational cost is to be polynomially bounded, yielding only poly$(n)$ samples of ${\cal P}$. In other words, one only has access to `few' samples of a distribution which has support on a `large' number of outputs. Yet somehow these few samples should contain sufficient information to extract the final result of the computation with high probability (working within the standard bounded-error setting). This point is nicely illustrated by considering again the factoring algorithm (or more generally the abelian hidden subgroup algorithm). Here the output distribution is (close to) the uniform distribution over an unknown \emph{group} $H$ (and determining this group is essentially the goal of the algorithm) and the final measurement only yields a small set  of $O(\log |H|)$ randomly chosen  elements of $H$. However, since such a small set of randomly generated group elements is with high probability a generating set of the group, a small number of measurements indeed suffices to determine the entire group $H$.

\Cref{thm_main1} can be extended by allowing the block $U_2$ to comprise tensor product operations, instead of the QFT:

\begin{theorem}\label{thm_main2}
The conclusions of \Cref{thm_main1} also apply if condition (b) is replaced by
\begin{itemize}
\item[(b')]  the second block $U_2$ is an arbitrary tensor product unitary operation $U_2=u_1\otimes \cdots \otimes u_n$. The circuit is followed by a measurement of an arbitrary subset of qubits $S$ in the computational basis, giving rise to a probability distribution ${\cal P}$.
\end{itemize}
In addition, the conclusions of \Cref{thm_main1} also apply when $U_2$ is a tensor product operation as in (b'), but now for quantum algorithms operating on the Hilbert space ${\cal H}=\mathbbm{C}_{d_1}\otimes \cdots\otimes \mathbbm{C}_{d_n}$ with $d_i=O(1)$ but otherwise arbitrary, i.e. ${\cal H}$ is a system of $n$ qudits of possibly different dimensions.
\end{theorem}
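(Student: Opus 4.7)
The plan is to mirror the proof of \Cref{thm_main1}, replacing the QFT-tailored sparse Fourier subroutine with an analogue adapted to the tensor-product basis induced by $U_2=u_1\otimes\cdots\otimes u_n$. Writing $|\phi\rangle := U_1|\psi_{\mathrm{in}}\rangle$, which is CT by hypothesis, the subset marginal reads
\begin{equation*}
p(x_S) = \langle\phi|\,\Pi_{x_S}\,|\phi\rangle, \qquad \Pi_{x_T} := \bigotimes_{j\in T} u_j^{\dagger}|x_j\rangle\langle x_j| u_j\;\otimes\; I_{[n]\setminus T},
\end{equation*}
so the task reduces to locating the $x_S$ (in $\{0,1\}^{|S|}$, or in the qudit generalization $\prod_{j\in S}[d_j]$) for which $p(x_S)$ is large; this is a sparse-approximation problem in the ``generalized Fourier'' basis $\{\bigotimes_{j\in S} u_j^{\dagger}|x_j\rangle\}$. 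Following the Goldreich--Levin / Kushilevitz--Mansour template, I fix an ordering of $S$ and run a tree search over prefixes $x_T$ ($T\subseteq S$), at each node estimating the prefix mass $W(x_T):= \langle\phi|\Pi_{x_T}|\phi\rangle$ and keeping those that exceed a threshold $\tau=\Theta(\varepsilon/t)$. Because $\sum_{x_T} W(x_T)=1$ at every level (a Parseval-type identity that follows from $\sum_{x_j}u_j^{\dagger}|x_j\rangle\langle x_j|u_j=I$), at most $1/\tau$ prefixes survive per level, so $O(|S|/\tau)$ estimator calls suffice in total, with an extra $O(1)$ branching factor in the qudit case since each $d_i=O(1)$.

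The crux is an efficient, bounded-variance estimator for $W(x_T)$ that works uniformly in $|T|$. Expanding $\Pi_{x_T}$ gives
\begin{equation*}
W(x_T) = \sum_{\substack{y,y'\\ y_{\bar T}=y'_{\bar T}}} \phi(y)^{*}\phi(y')\prod_{j\in T}(u_j)^{*}_{x_j,y_j}(u_j)_{x_j,y'_j},
\end{equation*}
and the plan is to sample $y\sim|\phi|^2$ (possible since $\phi$ is CT) together with an independent sample $y'_T$ from the product distribution $\pi_{x_T}(y'_T) := \prod_{j\in T}|(u_j)_{x_j,y'_j}|^2$, which is a valid distribution because $u_j$ is unitary and hence $\sum_{y_j}|(u_j)_{x_j,y_j}|^2=1$. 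Setting $y'=(y'_T,y_{\bar T})$ (so the constraint $y_{\bar T}=y'_{\bar T}$ holds by construction) and
\begin{equation*}
\widehat W := \frac{\phi(y')}{\pi_{x_T}(y'_T)\,\phi(y)}\prod_{j\in T}(u_j)^{*}_{x_j,y_j}(u_j)_{x_j,y'_j},
\end{equation*}
a short calculation confirms $\mathbb E[\widehat W]=W(x_T)$, and the factor $\prod|(u_j)_{x_j,y'_j}|^2$ appearing in $|\widehat W|^2$ is exactly cancelled by $\pi_{x_T}$ in the denominator so that $\mathbb E[|\widehat W|^2]\le 1$. Standard Chernoff/Chebyshev then give additive accuracy $O(\tau)$ with poly$(1/\tau,\log(1/\delta))$ samples.

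Combining the estimator with the tree search and a standard error analysis from the sparse-Fourier literature yields, with probability $\ge 1-\delta$, all prefixes of mass at least $\sim\tau$; the surviving leaves form the support of an $O(t/\varepsilon)$-sparse distribution $\mathcal P'$ that is $O(\varepsilon)$-close to $\mathcal P$ in $\ell_1$, and $\mathcal P'$ is trivially sampleable. The qudit generalization is immediate: the same estimator and search work with $y'_j\in[d_j]$ since the variance bound uses only $\sum_{y_j}|(u_j)_{x_j,y_j}|^2=1$, and the branching $d_{\max}=O(1)$ keeps the runtime polynomial. The main obstacle is precisely the construction of a bounded-variance estimator at every prefix length: the naive KM-style choice (sampling $y'_T$ uniformly) produces variance blowing up like $\prod_{j\in T}d_j$, and the essential trick is to let the importance distribution $\pi_{x_T}$ be dictated by the squared moduli of the $u_j$ entries.
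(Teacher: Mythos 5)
Your proposal is correct and follows the same skeleton as the paper's proof: the prefix masses $W(x_T)=\langle\phi|\Pi_{x_T}|\phi\rangle$ over which you run the tree search are exactly the marginals ${\cal P}_m$ of \Cref{thm_secondprop}, and your thresholded prefix search plus the final pruning, normalization and sampling step is precisely the content of \Cref{thm_compute_large_coeff} and \Cref{thm_central}, which the paper simply invokes rather than re-derives. The one genuinely different ingredient is how the marginal is estimated. The paper notes that $\Pi_{x_T}=|\alpha\rangle\langle\alpha|\otimes I$ with $|\alpha\rangle=\bigotimes_{j\in T}u_j^{\dagger}|x_j\rangle$ a product, hence CT, state and invokes \Cref{thm:partial overlap}; the estimator behind that lemma divides each term by the larger of the two outcome probabilities, so the sampled random variables are bounded by $1$ and the Chernoff--Hoeffding bound applies directly. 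Your estimator instead importance-samples $y'_T$ from the row distribution $\prod_{j\in T}|(u_j)_{x_j,y'_j}|^2$; it is unbiased and your computation that $\mathbb{E}[|\widehat W|^2]\leq 1$ is right, but $\widehat W$ itself is unbounded (small $|\phi(y)|$ in the denominator), so Chernoff does not literally apply: you need Chebyshev at constant confidence followed by median-of-means amplification to recover the $\log(1/\delta)$ dependence required for the stated runtime. That is a standard and easily supplied fix, so it is a cosmetic rather than substantive issue; with it, your argument delivers the same conclusion, and it handles the qudit generalization in the same way the paper does, since both only use $\sum_{y_j}|(u_j)_{x_j,y_j}|^2=1$ and $d_i=O(1)$.
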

A first example of the setting considered in \Cref{thm_main2} regards the family of IQP circuits (Instantaneous Quantum Polynomial time \cite{SB09}). Here the input is an $n$-qubit computational basis state $|x\rangle$ and the circuit consists of gates of the form $\exp[i\theta T]$  where $\theta$ is an arbitrary real parameter and where $T$ is a tensor product of the form $T = T_1\otimes\cdots\otimes T_n$ with $T_i\in\{I, X\}$. Since $X = HZH$,  every IQP circuit ${\cal C}$ can be written as ${\cal C}= H^{\otimes n} {\cal C}'H^{\otimes n}$ where ${\cal C}'$ is obtained by replacing each gate $\exp[i\theta T]$ by $\exp[i\theta T']$ where $T'=  T_1'\otimes\cdots\otimes T_n'$ with $T_i' = HT_iH$. Thus $T'$ is a tensor product of $Z$ operators and identity gates and hence each gate $e^{i\theta T'}$ is diagonal in the computational basis. Setting $U_1:= {\cal C}'H^{\otimes n}$ and $U_2:= H^{\otimes n}$ we find that conditions (a)-(b') of \Cref{thm_main2} are fulfilled; indeed it is straightforward to show that ${\cal C}'H^{\otimes n}|x\rangle$ is a CT state. Thus \Cref{thm_main2} shows that any IQP circuit with an approximately sparse output distribution can be simulated efficiently classically. This result is particularly interesting when compared to a hardness-of-simulation result obtained for general IQP circuits (i.e. without sparseness promise) in \cite{BJS11}. In the latter work it  was shown that an efficient, approximate classical simulation of IQP circuits (w.r.t. a certain multiplicative approximation) would imply a collapse of the polynomial hierarchy.

A second example of the setting considered in  \Cref{thm_main2} is the following. Consider a finite, possibly non-abelian group $G$ given as a direct product of $n$ individual groups, $G= G_1\times \cdots \times G_n$ where the order of each $G_i$ is $O(1)$. Define a Hilbert space ${\cal H}_G$ with computational basis vectors $|g\rangle = |g_1\rangle\otimes\cdots\otimes |g_n\rangle$ labeled by group elements $g=(g_1, \dots, g_n)\in G$. The space ${\cal H}_G$ is naturally associated with a tensor product of $n$ individual spaces, each of constant dimension. We may now consider quantum circuits of the following kind. The total Hilbert space is  ${\cal H}_G\otimes {\cal H}_n$ where ${\cal H}_n$ is an $n$-qubit system. In analogy to \Cref{fig:ftf-circuit}, we consider circuits of the block structure ${\cal C}= A_3A_2A_1$ where $A_1$ is the QFT over $G$ acting on the register ${\cal H}_G$, $A_2$ is an arbitrary poly-size circuit of classical reversible gates acting on the entire system and $A_3$ is the inverse QFT over $G$. The input is $|1_G ,0_n\rangle$ where $1_G$ is the neutral element in $G$ and $0_n$ denotes the all-zeros $n$-bit string; the circuit is followed by  measurement of the system ${\cal H}_G$ in the basis $\{|g\rangle\}$. Circuits of this kind are are of interest in the context of quantum algorithms for the (non-abelian) Hidden subgroup problem (see e.g. \cite{AMR07,lomont04}). For a definition of the QFT over a finite group we refer to e.g. \cite{MRR06}; here it suffices to mention that the QFT over a product group $G= G_1\times \cdots \times G_n$ is a tensor product operator. Furthermore it is easily verified (recall also the discussion on CT states above) that condition (a) in \Cref{thm_main1} is satisfied with $U_1 \equiv A_2A_1$. Thus \Cref{thm_main2} implies that any quantum circuit of this kind which has an approximately sparse output distribution can be simulated classically. This gives an example of a quantum circuit family comprising non-abelian QFTs (albeit of a restricted kind) which can be simulated classically.
For other examples of simulations of non-Abelian QFTs we refer to \cite{BV11}.

\subsection*{Sparse output states}

Let us present two more results regarding quantum circuits of the kinds considered in \Cref{thm_main1} and \Cref{thm_main2}, when promised that the output \emph{state} is approximately sparse. In this case we show how an approximation of the latter output state can be efficiently determined by means of a classical randomized algorithm.

An $n$-qubit state $|\varphi\rangle$ is called $\varepsilon$-approximately $t$-sparsee if there exists a state $|\varphi'\rangle$ which is $\varepsilon$-close to $|\psi\rangle$ and for which at most $t$ amplitudes $\langle x|\varphi'\rangle$ (with $|x\rangle$ computational basis states) are nonzero (see also section \ref{sect_sparseness}).

\begin{theorem}\label{thm_main3}
Consider a unitary $n$-qubit quantum circuit composed of two blocks ${\cal C}= U_2U_1$ with input state $|\psi_{\mbox{\scriptsize{in}}}\rangle$. Suppose that the following conditions are fulfilled:
\begin{itemize}
\item[(a)] the state $U_1|\psi_{\mbox{\scriptsize{in}}}\rangle$ obtained after applying the first block is CT;
\item[(b)] the second block $U_2$ is the QFT modulo $2^n$ or its inverse.
\item[(c)] The final state $|\psi_{\mbox{\scriptsize{out}}}\rangle = {\cal C}|\psi_{\mbox{\scriptsize{in}}}\rangle$ is promised to be $\sqrt{\varepsilon}$-approximately $t$-sparse for some $\varepsilon\leq 1/6$ and some $t$.
\end{itemize}
Then there exists a randomized classical algorithm with runtime poly$(n, t, 1/\varepsilon, \log{\frac{1}{\delta}})$ which outputs (by means of listing all nonzero amplitudes) an $s$-sparse state $|\psi\rangle$ which, with probability at least $1-\delta$, is $O(\sqrt{\varepsilon})$-close to $|\psi_{\mbox{\scriptsize{out}}}\rangle$, where $s=O(t/\varepsilon)$.
\end{theorem}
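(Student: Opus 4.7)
The plan is to exploit the fact that the amplitudes of $|\psi_{\mbox{\scriptsize{out}}}\rangle$ in the computational basis are precisely the Fourier coefficients of the CT state $|\phi\rangle := U_1|\psi_{\mbox{\scriptsize{in}}}\rangle$. Writing $\phi(y) := \langle y|\phi\rangle$, condition (b) yields $\langle x|\psi_{\mbox{\scriptsize{out}}}\rangle = \hat{\phi}(x) = \frac{1}{\sqrt{2^n}}\sum_y \phi(y)\,\omega^{\pm xy}$ with $\omega = e^{2\pi i/2^n}$. By CT property (2), $\phi(y)$ can be evaluated in poly$(n)$ time on any input, so $\phi$ is accessible as a classically-computable complex function on $\mathbb{Z}_{2^n}$, and the task reduces to approximating its large Fourier coefficients.

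The key tool is a sparse Fourier transform (SFT) subroutine for complex-valued CT functions---an extension of Kushilevitz--Mansour that is the main technical contribution advertised in the abstract. Given evaluation access to $\phi$, a threshold $\tau$, an accuracy $\eta$ and a failure parameter $\delta$, this subroutine should return in time $\mathrm{poly}(n,1/\tau,1/\eta,\log(1/\delta))$ a list $S$ of at most $O(1/\tau)$ indices $x$ paired with estimates $\tilde a_x$ such that, with probability $\geq 1-\delta$: (i) $S$ contains every $x$ with $|\hat\phi(x)|^2 \geq \tau$, and (ii) $|\tilde a_x - \hat\phi(x)| \leq \eta$ for every $x\in S$. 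I would invoke this with $\tau = c\,\varepsilon/t$ and $\eta = c'\,\varepsilon/\sqrt{t}$ for small absolute constants, then output the vector $|\psi\rangle := \sum_{x\in S}\tilde a_x|x\rangle$ (renormalized if desired), whose sparsity is $s = O(t/\varepsilon)$.

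For the error analysis I would insert the intermediate vector $|\psi_S\rangle := \sum_{x\in S}\hat\phi(x)|x\rangle$ and apply the triangle inequality. By hypothesis (c), writing $T^*$ for the top-$t$ indices of $|\hat\phi|$, one has $\sum_{x\notin T^*}|\hat\phi(x)|^2 \leq \varepsilon$; any $x\in T^*\setminus S$ must satisfy $|\hat\phi(x)|^2 < \tau$ by guarantee (i), so $\sum_{x\notin S}|\hat\phi(x)|^2 \leq \varepsilon + t\tau = O(\varepsilon)$, giving $\||\psi_{\mbox{\scriptsize{out}}}\rangle-|\psi_S\rangle\|_2 = O(\sqrt{\varepsilon})$. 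Meanwhile guarantee (ii) gives $\||\psi_S\rangle-|\psi\rangle\|_2 \leq \sqrt{|S|}\,\eta = O(\sqrt{\varepsilon})$ by the choice of $\eta$. Combining yields the claimed $O(\sqrt{\varepsilon})$-closeness, and a final normalization step changes the distance by at most a lower-order term.

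The hard part is establishing the SFT subroutine for complex CT functions and propagating all error parameters coherently. The classical KM procedure finds large Fourier coefficients by a bit-by-bit branch-and-prune tree search that estimates, at each internal node, the total squared mass $\sum_{x\in T}|\hat\phi(x)|^2$ over a coset $T$ and prunes below threshold; generalising this to complex-valued $\phi$ on $\mathbb{Z}_{2^n}$ requires replacing the $\{\pm 1\}$-valued inner-product estimators by second-moment Monte Carlo averages computed via pointwise evaluation of $\phi$ at uniformly random queries, and bounding the Chebyshev/Hoeffding tail since $|\phi(y)|$ is not uniformly bounded and may concentrate. Controlling the failure probability uniformly over a polynomially-deep branching tree with $O(1/\tau)$ surviving leaves, so that the aggregate coefficient errors on $S$ all stay within $\eta$, is where the bulk of the technical work lies.
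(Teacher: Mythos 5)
Your reduction of the theorem to a sparse-Fourier-transform subroutine, your choice of thresholds ($\tau=\Theta(\varepsilon/t)$, $\eta=\Theta(\varepsilon/\sqrt{t})$), and your triangle-inequality error analysis are all sound and correspond closely to how the paper assembles its proof (the paper's only cosmetic difference is that it recovers the probabilities $p_x'$ via its sparse-sampling theorem and then separately estimates each amplitude to extract its phase, outputting $\sum_x \theta_x\sqrt{p_x'}\,|x\rangle$, rather than using the estimated amplitudes directly). The problem is that you have asserted, rather than established, the one ingredient that carries all the weight: the KM-style subroutine for a \emph{general complex-valued} CT amplitude function. Moreover, the specific implementation you sketch would fail. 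You propose to run the branch-and-prune search using only the query property of CT states (Definition \ref{def:ct}, item 2), estimating the bucket masses $\sum_{x\in T}|\hat\phi(x)|^2$ by Monte Carlo averages of $\phi$ evaluated at uniformly random points. For a normalized state the typical amplitude is of order $2^{-n/2}$, so the estimators must be rescaled by factors of order $2^{n}$; after this rescaling the summands are no longer bounded by a constant (they can be as large as $2^{n}$, e.g.\ when the mass of $\phi$ concentrates on few basis states), Hoeffding does not apply, and the variance of the bucket-mass estimators is in general exponentially large. You flag this difficulty yourself but leave it unresolved, and it is precisely the crux.

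The paper closes this gap by using the \emph{sampling} property of CT states (Definition \ref{def:ct}, item 1), which your proposal never invokes. Concretely: it first converts the promise on the state into a promise on the output distribution (\Cref{thm_sparse_equiv}), and then shows (\Cref{thm_firstprop}) that the output distribution \emph{and all its prefix marginals} are additively approximable. The trick is the identity ${\cal F}^{\dagger}Z_{2^n}{\cal F}=X_{2^n}$: the projector onto a fixed prefix $y_1\cdots y_m$ is written as $\frac{1}{2^m}\sum_u M^u$ with $M$ a power of $Z_{2^n}$ times a phase, so each marginal probability becomes an average of $2^m$ matrix elements $\langle\mbox{CT}|N^u\otimes I|\mbox{CT}\rangle$ with $N^u$ efficiently computable basis-preserving; each such overlap is estimated with bounded-variance estimators by \emph{importance sampling} from $|\langle y|\mbox{CT}\rangle|^2$ (\Cref{thm:overlap}), and \Cref{thm_approximate_F} handles the average over $u$. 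With additive approximability of the marginals in hand, the tree search is exactly \Cref{thm_compute_large_coeff}/\Cref{thm_central}, and the individual amplitudes $\langle x|{\cal F}|\mbox{CT}\rangle$ are again overlaps of two CT states (\Cref{thm_CT_chernoff}). So your outline is the right skeleton, but to make it a proof you must replace the uniform-query estimators by estimators built from the CT sampling property (or an equivalent device); query access alone, as used in your sketch, does not support the required concentration bounds.
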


\begin{theorem}\label{thm_main4}
The conclusions of \Cref{thm_main3} also apply if condition (b) is replaced by
\begin{itemize}
\item[(b')]  the second block $U_2$ is an arbitrary tensor product unitary operation $U_2=u_1\otimes \cdots \otimes u_n$.
\end{itemize}
In addition, the conclusions of \Cref{thm_main3} also apply when $U_2$ is a tensor product operation as in (b'), but now for quantum algorithms operating on the Hilbert space ${\cal H}=\mathbbm{C}_{d_1}\otimes \cdots\otimes \mathbbm{C}_{d_n}$ with $d_i=O(1)$ but otherwise arbitrary.
\end{theorem}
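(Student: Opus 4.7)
The strategy parallels the proof of Theorem~\ref{thm_main3}, with the block QFT modulo~$2^n$ replaced by a tensor product unitary; the heart of the argument will be a Kushilevitz--Mansour-style sparse-recovery primitive for the ``product Fourier'' coefficients $a_s:=\langle s|U_2|\phi\rangle$, where $|\phi\rangle:=U_1|\psi_{\rm in}\rangle$ is the CT state provided by condition~(a). Once this primitive is in place, it returns a list of the heavy amplitudes of $|\psi_{\rm out}\rangle$, and combining it with the $\sqrt\varepsilon$-approximate $t$-sparseness promise~(c) via the standard $\ell_2$ truncation argument yields an $O(\sqrt\varepsilon)$-close $s$-sparse state with $s=O(t/\varepsilon)$, exactly as in Theorem~\ref{thm_main3}.

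The technical core is an efficient estimator for the partial mass $M(\sigma):=\sum_{s:\,s_{1\cdots k}=\sigma}|a_s|^2$ at an arbitrary prefix $\sigma\in\{0,1\}^k$. Set $\alpha_\sigma(x):=\prod_{i\le k}(u_i)_{\sigma_i x_i}$, a product function with $\sum_x|\alpha_\sigma(x)|^2=1$. Using the unitarity identity $\sum_{s_i}(u_i)_{s_i x_i}\overline{(u_i)_{s_i y_i}}=\delta_{x_i y_i}$ to collapse the sum over $s_{>k}$ gives
\[
M(\sigma)\;=\;\sum_{z}|A_\sigma(z)|^2,\qquad A_\sigma(z)\;:=\;\sum_{x}\alpha_\sigma(x)\,\phi(x,z),
\]
with $x$ ranging over $\{0,1\}^k$ and $z$ over $\{0,1\}^{n-k}$. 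The plan is to estimate $M(\sigma)$ by a two-sample importance estimator: draw $(x,z)\sim|\phi|^2$ via the CT sampling property and independently $y\sim|\alpha_\sigma|^2$ (a product distribution over the first $k$ qubits, trivial to sample one qubit at a time), then form
\[
X\;:=\;\frac{\alpha_\sigma(x)\,\overline{\alpha_\sigma(y)}\,\phi(x,z)\,\overline{\phi(y,z)}}{|\phi(x,z)|^2\,|\alpha_\sigma(y)|^2}.
\]
A direct calculation gives $\mathbb{E}[X]=M(\sigma)$ and $\mathbb{E}[|X|^2]=\sum_x|\alpha_\sigma(x)|^2\sum_{y,z}|\phi(y,z)|^2=1$, so $O(1/\varepsilon^2)$ samples plus median-of-means boosting yield additive accuracy $\varepsilon$ at the required high probability. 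Crucially, $X$ never references the marginal $\mu(z)=\sum_x|\phi(x,z)|^2$, which would otherwise require summing $2^k$ terms and destroy efficiency for large~$k$.

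With this primitive in hand, the proof follows the Kushilevitz--Mansour binary-tree search of \cite{KM91}: descending from $k=0$ to $k=n$, retain exactly those prefixes with estimated $M(\sigma)\ge\tau:=\Theta(\varepsilon/t)$. Parseval ($U_2$ unitary) bounds the surviving list by $1/\tau$ at each level, so the entire search visits $O(nt/\varepsilon)$ nodes and terminates with an $O(t/\varepsilon)$-sized list of candidate heavy indices $s\in\{0,1\}^n$. For each such candidate $s$, the (complex) amplitude $a_s=\sum_x\alpha_s(x)\phi(x)$ itself is then estimated by the single-sample importance estimator $\alpha_s(x)\phi(x)/q(x)$ with $x\sim q:=\tfrac12|\phi|^2+\tfrac12|\alpha_s|^2$; the AM--GM inequality $2|\alpha_s(x)||\phi(x)|\le|\alpha_s(x)|^2+|\phi(x)|^2$ bounds this estimator pointwise by~$1$, so $O(1/\varepsilon^2)$ samples again suffice. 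The qudit generalization is verbatim: the derivation of $M(\sigma)=\sum_z|A_\sigma(z)|^2$ invokes only unitarity of each local factor, and $|\alpha_\sigma|^2$ remains a product distribution over constant-dimensional registers.

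The delicate point I expect to require the most care is the variance bound $\mathbb{E}[|X|^2]\le 1$, and in particular the choice of sampling distribution for~$y$. Unlike the Hadamard setting of \cite{KM91}, where the characters have uniform modulus $2^{-k/2}$, the local matrix entries $(u_i)_{\sigma_i x_i}$ have nonuniform magnitudes; sampling $y$ uniformly would make the second moment grow like $4^k$. Drawing $y$ from the correct product distribution $|\alpha_\sigma|^2$ is precisely what causes the second moment to telescope to~$1$ independently of $n$ and~$k$, and is the essential ingredient that promotes the classical Kushilevitz--Mansour algorithm from the Hadamard basis to an arbitrary tensor product basis over constant-dimensional local systems.
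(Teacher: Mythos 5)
Your proposal is correct and follows the same overall strategy as the paper: a Kushilevitz--Mansour prefix search for the heavy output amplitudes, made possible by efficient estimation of the prefix masses $M(\sigma)$, followed by estimation of the amplitudes on the resulting $O(t/\varepsilon)$-sized list and truncation using the sparseness promise. The difference lies in how the two estimation primitives are realized. The paper proceeds modularly: it observes that $U_2^{\dagger}(|\sigma\rangle\langle\sigma|\otimes\id)U_2=|\alpha\rangle\langle\alpha|\otimes\id$ with $|\alpha\rangle$ a product (hence CT) state, so the prefix masses are additively approximable by the CT partial-overlap lemma (\Cref{thm_secondprop} via \Cref{thm:partial overlap}); it then invokes the generic \Cref{thm_central} to obtain a sparse approximation ${\cal P}'$ of the output \emph{distribution}, and finally glues on separately estimated \emph{phases} (\Cref{thm_CT_chernoff}, \Cref{thm_phase_estimate}) to assemble the state. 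You instead build the estimators by hand: your identity $M(\sigma)=\sum_z|A_\sigma(z)|^2$ is the coordinate form of the same product-projector observation, and your two-sample importance estimator with $\mathbf{E}|X|^2=1$ plus median-of-means replaces the paper's pointwise-bounded estimators plus Chernoff--Hoeffding (the sampling-according-to-the-right-distribution trick inside \Cref{thm:overlap} plays exactly the role of your choice $y\sim|\alpha_\sigma|^2$, so the ``delicate point'' you single out is indeed the crux in both treatments); likewise you estimate the complex amplitudes $a_s$ directly via the mixture $q=\tfrac12|\phi|^2+\tfrac12|\alpha_s|^2$ rather than estimating probabilities and phases separately. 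Both routes give runtime poly$(n,t,1/\varepsilon,\log\frac{1}{\delta})$; yours is more self-contained, while the paper's factors through \Cref{thm_central} and the CT-overlap lemmas, which are reused for \Cref{thm_main1} and \Cref{thm_main2}. One bookkeeping point to tighten in your final assembly: estimating each listed amplitude only to accuracy $\varepsilon$ is not quite enough, since the $\ell_2$ error accumulated over $|L|=O(t/\varepsilon)$ entries could then be of order $\sqrt{t\varepsilon}$; you need per-amplitude accuracy $O(\varepsilon/\sqrt{t})$ (compare the paper's choice $\sqrt{\varepsilon^{3}/8t}$ for the phase step), which still costs only polynomially many samples, so this is a parameter fix rather than a gap.
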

\Cref{thm_main3} and \Cref{thm_main4} are closely connected to an important result in theoretical computer science, namely the Kushilevitz-Mansour (KM) algorithm \cite{KM91}: if one has oracle access to a Boolean function $f:B_n\to \{1, -1\}$ which is promised to have an approximately sparse Fourier spectrum, it is possible to compute a sparse approximation of $f$ in polynomial time. We connect our result to Kushilevitz-Mansour by considering \Cref{thm_main4} for an $n$-qubit system where \be |\psi_{\mbox{\scriptsize{in}}}\rangle \equiv |\psi_f\rangle = \frac{1}{2^{n/2}} \sum_x f(x)|x\rangle\ee  is a CT state, $U_1\equiv I$ and $U_2\equiv H^{\otimes n}$ where $H$ is the Hadamard gate. Then \Cref{thm_main4} implies that if $ H^{\otimes n}|\psi_f\rangle$  is promised to be approximately sparse, then a sparse approximation of the latter state can be computed efficiently. This is effectively (a version of) the KM result, stated in the language of quantum computing. Similarly, \Cref{thm_main3} relates to a version of the KM result \cite{mansour95} considered for transformations of Boolean functions under the Fourier transform over $\mathbbm{Z}_{2^n}$. The proof method of the KM theorem, suitably generalized to our setting at hand, will be an important tool for us.

\subsection*{Computing significant weights}

Whereas \Cref{thm_main1,thm_main2,thm_main3,thm_main4} involve a promise about the approximate sparseness of the output distributions/states, our final result does not. The following theorem asserts that, for CT states expanded in the Fourier basis, it is possible to efficiently determine (in a suitable approximate and probabilistic sense) all Fourier coefficients which are larger than some threshold value; a similar result also holds for CT states expanded in product bases. The result is in the present paper mainly used as a technique in the proof of \Cref{thm_main3,thm_main4} (similar to the proof of Kushilevitz-Mansour). However we believe it may be of independent interest, given the broadness of the class of CT states and the frequent usage of Fourier transforms.

Let $\mathbbm{Z}_{2^n}$ denote the cyclic group of integers modulo $2^n$. Any  $n$-bit string $x$ is identified with an element of $\mathbbm{Z}_{2^n}$ via the binary expansion.
Recall that the quantum Fourier transform over $\mathbbm{Z}_{2^n}$ is the following $n$-qubit unitary operator:
\begin{equation}
{\cal F}_{2^n} = \frac{1}{\sqrt{2^n}} \sum_{x,y \in \mathbbm{Z}_{2^n}} \exp\left({\frac{2 \pi i xy}{2^n} } \right)\ket{x}\bra{y}.
\end{equation}
and the \emph{Fourier basis} is simply the orthonormal basis $\{|F_x\rangle: x\in B_n\}$ defined by $|F_x\rangle= {\cal F}_{2^n}|x\rangle$.

\begin{theorem}\label{thm_main5}
Let $|\psi\rangle$ be an $n$-qubit CT state and consider its expansion in the Fourier basis: \be |\psi\rangle = \sum \hat\psi_x |F_x\rangle.\ee There exists a randomized classical algorithm with runtime poly$(n, \frac{1}{\theta}, \log\frac{1}{\pi})$ which outputs a list $L =\{x^1, \dots, x^l\}$ where $l\leq 2/\theta$ and where each $x^i$ is an $n$-bit string such that, with probability at least $1-\pi$:
\begin{itemize}
\item[(a)] for all $y\in L$, it holds that $|\hat\psi_x|^2\geq \frac{\theta}{2}$;
\item[(b)] every $k$-bit string $x$ satisfying  $|\hat\psi_x|^2\geq \theta$ belongs to the list $L$;
\end{itemize}
\noindent Furthermore, given any $x\in B_n$, there exists a classical algorithm with runtime poly$(n, 1/\varepsilon, \log{\frac{1}{\delta}})$ which, with probability at least $1-\delta$, outputs an $\varepsilon$-approximation of $\hat\psi_x$.
Finally, the above results also holds if the Fourier basis is replaced by a product basis $\{U|x\rangle\}$ where $U=U_1\otimes \cdots\otimes U_n$ is an arbitrary tensor product unitary operator.
\end{theorem}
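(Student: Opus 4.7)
The plan is to build a quantum-state analogue of the Kushilevitz--Mansour (KM) heavy-coefficient algorithm, driven by a single bounded-variance importance-sampling estimator that exploits \emph{both} parts of the CT definition. For each length-$k$ prefix $\alpha$---defined with respect to the tensor factorization of $U=U_1\otimes\cdots\otimes U_n$ in the product-basis case, or to the bucketing $x=q\cdot 2^k+r$ obtained by fixing the low-order $k$ bits of $x$ in the QFT modulo~$2^n$ case---I would introduce the bucket weight
\[
W_\alpha \;=\; \sum_\beta |\hat\psi_{\alpha\beta}|^2 \;=\; \langle\psi|\bigl(V|\alpha\rangle\langle\alpha|V^\dagger\otimes I\bigr)|\psi\rangle,
\]
where $V$ is the part of $U$ acting on the first $k$ registers (a partial Fourier transform in the QFT case). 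Since $\sum_\alpha W_\alpha=1$, at most $2/\theta$ prefixes at each depth can satisfy $W_\alpha\geq\theta/2$, so the classical KM binary search---refine each surviving prefix by one coordinate, estimate both children, discard any estimate below $3\theta/4$---touches $O(n/\theta)$ prefixes in total and at depth~$n$ outputs the desired list~$L$.

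The technical core of the proof is an efficient estimator for $W_\alpha$. Starting from
\[
W_\alpha \;=\; \sum_{y_1,y_1',y_2} v_{y_1}^{\ast}\,v_{y_1'}\,\psi_{y_1y_2}\,\psi_{y_1'y_2}^{\ast},\qquad v:=V|\alpha\rangle,
\]
I would draw $(Y_1,Y_2)\sim|\psi_{Y_1Y_2}|^2$ using the CT sampling property and, independently, $Y_1'\sim|v_{Y_1'}|^2$, which is a product distribution over $O(1)$-dimensional factors and hence trivially sampleable (and uniform in the Hadamard and QFT special cases). The CT evaluator lets me compute $\psi_{Y_1Y_2}$ and $\psi_{Y_1'Y_2}$, and I form
\[
T \;=\; \frac{v_{Y_1}^{\ast}\,\psi_{Y_1'Y_2}^{\ast}}{v_{Y_1'}^{\ast}\,\psi_{Y_1Y_2}^{\ast}}.
\]
The identities $|\psi|^2/\psi^{\ast}=\psi$ and $|v|^2/v^{\ast}=v$ show immediately that $\mathbb{E}[T]=W_\alpha$, and the same cancellation collapses the second moment into a product of unit norms,
\[
\mathbb{E}|T|^2 \;=\; \Bigl(\sum_{Y_1}|v_{Y_1}|^2\Bigr)\Bigl(\sum_{Y_1',Y_2}|\psi_{Y_1'Y_2}|^2\Bigr) \;=\; 1.
\]
Median-of-means over $O(\log(1/\pi)/\eta^2)$ evaluations of $T$ therefore yields a $\pm\eta$ estimate of $W_\alpha$ with failure probability at most $\pi$. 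For the QFT modulo $2^n$ case, a short Parseval-type computation first reduces $W_r$ to the analogous triple sum with $v_{y_1}^{\ast}v_{y_1'}$ replaced by $2^{-k}\exp\bigl(-2\pi i\,r(y_{\mathrm{hi}}-y_{\mathrm{hi}}')/2^k\bigr)$, and the same template estimator (with $Y_{\mathrm{hi}}'$ drawn uniformly and $(Y_{\mathrm{hi}},Y_{\mathrm{lo}})$ from CT sampling) is again unbiased with unit second moment.

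Parts (a)--(b) of the theorem then follow by inserting this subroutine into the KM binary search with $\eta=\Theta(\theta)$ and a union bound over the $O(n/\theta)$ bucket evaluations, giving total runtime $\mathrm{poly}(n,1/\theta,\log(1/\pi))$. The final claim---approximating a given $\hat\psi_x$ individually---admits a one-level version of the same estimator: sample $Y\sim|\psi_Y|^2$ and evaluate $S=(U^\dagger)_{xY}/\psi_Y^{\ast}$; then $\mathbb{E}[S]=\hat\psi_x$ and $\mathbb{E}|S|^2\leq\sum_Y|(U^\dagger)_{xY}|^2=1$, while $S$ is computable in polynomial time because $(U^\dagger)_{xy}=2^{-n/2}\omega^{-xy}$ in the Fourier case and factorizes over $O(1)$-dimensional qudits in the product case. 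The main obstacle I expect is precisely the variance bound: individual amplitudes $\psi_y$ are typically of order $2^{-n/2}$, so any naive estimator (uniform sampling of $Y_1$, or indicating the coincidence $Y_2=Y_2'$ using a second CT sample) would blow up exponentially. Importance-sampling \emph{both} $Y_1$ according to $|\psi|^2$ \emph{and} $Y_1'$ according to $|v|^2$ is what makes numerator and denominator scale alike and delivers the clean bound $\mathbb{E}|T|^2=1$; verifying this cancellation uniformly across all levels $k$ and across both bases considered is the step I expect to require the most care.
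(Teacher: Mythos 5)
Your proposal is correct and follows the same overall architecture as the paper: identify the heavy Fourier/product-basis coefficients of $|\psi\rangle$ with the heavy outcomes of the distribution obtained from ${\cal F}^\dagger|\psi\rangle$ (resp.\ $U^\dagger|\psi\rangle$), run a Kushilevitz--Mansour prefix search on bucket weights (residues mod $2^k$, resp.\ prefixes of the product basis), and estimate each individual $\hat\psi_x$ directly at the end. Where you genuinely diverge is in the estimation primitive. The paper never builds a bespoke estimator for the bucket weights: it routes everything through its CT-overlap lemmas, writing the bucket projector either as $\frac{1}{2^m}\sum_u N^u$ with $N$ a (basis-preserving) generalized Pauli in the QFT case, or as a product-state projector $|\alpha\rangle\langle\alpha|\otimes \id$ in the tensor-product case, and then estimating each overlap with a \emph{bounded} importance sampler (the $p_x$-vs-$q_x$ splitting trick of the appendix) so that the plain Chernoff--Hoeffding bound applies. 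You instead use a single unbounded ratio estimator $T=v_{Y_1}^{\ast}\psi_{Y_1'Y_2}^{\ast}/(v_{Y_1'}^{\ast}\psi_{Y_1Y_2}^{\ast})$, whose unbiasedness and unit second moment you verify correctly (the estimator is well defined almost surely since zero-amplitude strings are never sampled), and you compensate for unboundedness with median-of-means (applied to real and imaginary parts); your one-sample estimator $S=(U^\dagger)_{xY}/\psi_Y^{\ast}$ for individual amplitudes is likewise a valid alternative to the paper's Lemma on overlaps of CT states. Your route is more self-contained and arguably cleaner (one estimator covers both bases, and the Parseval reduction of $W_r$ replaces the paper's generalized-Pauli averaging identity), while the paper's route yields bounded random variables and reuses lemmas needed elsewhere. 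One small detail you should add for rigor: your union bound over $O(n/\theta)$ bucket estimates presupposes that no intermediate list exceeds $2/\theta$ survivors, which is only guaranteed when all estimates succeed; as in the paper's proof of its large-coefficient theorem, you should cap each list at $2/\theta$ elements (halting otherwise) so that the number of estimates, and hence the runtime and the union bound, are controlled unconditionally.
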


\section{Proof outline and organization of the paper}

In \Cref{sect_sparseness} we discuss $\varepsilon$-approximately $t$-sparse distributions and states. A key property will be \Cref{lem:largecoeffs_prob} where we show that the large probabilities contain most of the information of an approximately sparse distribution i.e. discarding the small probabilities does not introduce too much error.

It will be a key point in our proofs that the output distributions of the quantum circuits considered in \Cref{thm_main1,thm_main2,thm_main3,thm_main4}, as well as a suitable subset of their marginal distributions,  are what will be called here \emph{additively approximable}. The latter are distributions whose individual probabilities can be efficiently approximated with a randomized classical algorithm with a performance in terms of error and success probability which is similar to the one given by the Chernoff bound.  Our analysis of additively approximable distributions (\Cref{sect_chernoff} and \Cref{sect_algorithm}), which is a significant component in the proofs of our main results, will not make reference to quantum computing (the latter is done as of \Cref{sect_CT}). In \Cref{sect_chernoff}, we introduce the notion of additively approximable distributions and develop their properties.  An important feature will be established in \Cref{thm_compute_large_coeff} where we show that, for any probability distribution which is itself additively approximable and for which a designated subset of its marginals are additively approximable as well, it is possible to efficiently determine (in a suitable approximate sense) those probabilities which are larger than some given, sufficiently large, threshold value.
This lemma, in combination with \Cref{lem:largecoeffs_prob} mentioned above, will yield an efficient algorithm to (approximately) sample any $\varepsilon$-approximately $t$-sparse distribution which is additively approximable and whose marginals are as well; this algorithm is given in \Cref{sect_algorithm} (\Cref{thm_central}). The results developed in \Cref{sect_chernoff} to \Cref{sect_algorithm} will follow the general proof idea of the Kushilevitz-Mansour theorem \cite{KM91,GL89}.

In \Cref{sect_CT} we recall classical simulation properties of CT states.  Finally,  in \Cref{sect_proofs} the proofs of our main results are given: the main strategy is to show that the output distributions of the circuits considered in our main theorems, as well as their marginals, are additively approximable.

\section{Approximate sparseness}\label{sect_sparseness}

\subsection{Basic definitions}

We call a quantum state $\ket{\varphi}$ \emph{$t$-sparse} (relative to the computational basis), if at most $t$ amplitudes $\langle x|\varphi\rangle$ are nonzero.
We will use the standard $\ell_2$-norm as as the natural distance measure for two pure states. Thus we will call two quantum states $\ket{\varphi},\ket{\psi}$ \emph{$\varepsilon$-close}, if $\norm{\ket{\varphi}-\ket{\psi}}_2 \leq \varepsilon$.
We call a normalized pure state $\ket{\varphi}$ \emph{$\varepsilon$-approximately $t$-sparse} if there exists a, not necessarily normalized, $t$-sparse vector which is $\varepsilon$-close to $\ket{\varphi}$. In this paper we will mostly be interested in a sparseness $t$ which scales at most polynomially with the number of qubits $n$, and in an error $\varepsilon$ which is worst polynomially small in $n$.
Note that in the definition of approximate sparseness we allow the $t$-sparse vector to be an unnormalized state (this will be a convenient definition in our proofs). However, if $\ket{\varphi}$ is $\varepsilon$-approximately $t$-sparse and if $\varepsilon$ is sufficiently small (namely $\varepsilon\leq 0.5$), there always exists a \emph{normalized} $t$-sparse state $|\varphi'\rangle$ which is $O(\varepsilon)$-close to $|\varphi\rangle$ as well (see \Cref{sect_sparse_basic}).

Similar to sparse quantum states, we call a probability distribution ${\cal P}=\{p_x: x\in B_n\}$ on the set of $n$-bit strings $t$-sparse if at most $t$ of its probabilities $p_x$ are nonzero. The distance between two probability distributions ${\cal P}$ and ${\cal P}'$ will be measured in terms of the total variation distance, defined by \be \|{\cal P}-{\cal P}'\|_1= \sum |p_x - p'_x|.\ee We say that ${\cal P}$ is $\varepsilon$-approximately $t$-sparse if there exists a $t$-sparse vector $v= (v_x: x\in B_n)$ such that $\sum |p_x - v_x|\leq \varepsilon$. The entries $v_x$ may a priori be arbitrary complex numbers. However, similar to above,  if ${\cal P}$ is $\varepsilon$-approximately $t$-sparse and if $\varepsilon$ is sufficiently small, there always exists a \emph{normalized probability distribution} ${\cal P}'$ which is $t$-sparse and such that $\|{\cal P}- {\cal P}'\|_1\leq O(\varepsilon)$  (see \Cref{sect_sparse_basic}).

The support of a probability distribution ${\cal P}=\{p_x: x\in B_n\}$ is the set of all $x$ for which $p_x\neq 0$. If $A\subseteq B_n$, the restriction of ${\cal P}$ to $A$ is the subnormalized distribution $\{q_x: x\in B_n\}$ defined by \be\label{restriction} q_x = \left\{ \begin{array}{cl} p_x &\mbox{if } x\in A \\ 0& \mbox{otherwise.}\end{array}\right.\ee
Similarly, the support of an $n$-qubit state is the set of all $x$ for which $\langle x|\varphi\rangle\neq 0$. If $A\subseteq B_n$, the restriction of $|\varphi\rangle$ to $A$ is the subnormalized state \be\sum_{x\in A} \langle x|\varphi\rangle |x\rangle.\ee

\subsection{Basic properties}\label{sect_sparse_basic}

Let ${\cal P}= \{p_x: x\in B_n\}$ be an arbitrary probability distribution. Let $A_t\subseteq B_n$ be a subset which, roughly speaking, contains  $t$ bit strings corresponding to the $t$ largest probabilities of ${\cal P}$. More formally, $A_t$ satisfies the properties (i) $|A_t|=t$ and (ii) $p_x\geq p_y$ for all $x\in A_t$ and $y\notin A_t$. Note that there may be more than one set $A_t$ with this property (e.g. if multiple probabilities happen to be equal). For our purposes the particular choice of $A_t$ will however be irrelevant. Let ${\cal P}[t]$ denote the restriction of ${\cal P}$ to $A_t$. Note that ${\cal P}[t]$ is $t$-sparse. Furthermore it is straightforward to show that, for any $t$-sparse vector $v=(v_x: x\in B_n)$ (where the $v_x$ may be arbitrary complex numbers), one has $\|{\cal P}[t] - {\cal P}\|_1 \leq \| v - {\cal P}\|_1$ i.e. ${\cal P}[t]$ has minimal distance to ${\cal P}$ among all such  $t$-sparse $v$'s.  It follows that ${\cal P}$ is $\varepsilon$-approximately $t$-sparse iff
\begin{equation}\|{\cal P} - {\cal P}[t]\|_1\leq \varepsilon.\label{eq:P_t} \end{equation}
Next we show that, for any $\varepsilon$-approximately $t$-sparse distribution ${\cal P}$ with $\varepsilon\leq 0.5$ there always exists a $t$-sparse \emph{normalized} distribution ${\cal P}'$ which is $O(\varepsilon)$-close to ${\cal P}$. To see this, set ${\cal P}':=  {\cal P}[t]/\|{\cal P}[t]\|_1$. Owing to \cref{eq:P_t} we have \begin{equation}\label{eq:bound_P_t} 1-\varepsilon\leq \| {\cal P}[t]\|_1\leq 1.\end{equation} We then find
\be\label{eq:argument_normalize}
\|{\cal P}' - {\cal P}\|_1 &=& \frac{\| {\cal P}[t]- \|{\cal P}[t] \|_1 \cdot {\cal P}\|_1}{\|{\cal P}[t] \|_1}\leq \frac{\| {\cal P}[t]- \|{\cal P}[t] \|_1 \cdot {\cal P}\|_1}{1-\varepsilon}\nonumber\\ &\leq& \frac{\| {\cal P}[t] - {\cal P}\|_1}{1-\varepsilon} + \frac{(1- \|{\cal P}[t]\|_1)\cdot \|{\cal P}\|_1}{1-\varepsilon} \leq \frac{2\varepsilon}{1-\varepsilon}.
\ee Here in the equality we used the definition of ${\cal P}[t]$; in the first inequality we used \cref{eq:bound_P_t}; in the second inequality we used the triangle inequality; finally we  used \cref{eq:P_t} and \cref{eq:bound_P_t}. Then, if $\varepsilon \leq 0.5$, we have $\|{\cal P}' - {\cal P}\|_1\leq 4\varepsilon$.

Let $|\varphi\rangle$ be an $n$-qubit state. In analogy with above, let $A_t\subseteq B_n$ be a subset which, roughly speaking, contains $t$ bit strings corresponding to the $t$ largest amplitudes of $|\varphi\rangle$. More formally, $A_t$ satisfies (i) $|A_t|=t$ and (ii) $|\langle x|\varphi\rangle|\geq |\langle y|\varphi\rangle|$ for all $x\in A_t$ and $y\notin A_t$. Letting $|\varphi[t]\rangle$ denote the restriction of $|\varphi\rangle$ to $A_t$, it is straightforward to show that $|\varphi[t]\rangle$ has minimal $\ell_2$-distance to $|\varphi\rangle$ among all $t$-sparse vectors. It follows that $|\varphi\rangle$ is $\varepsilon$-approximately $t$-sparse iff \be\label{varphi_t} \||\varphi\rangle - |\varphi[t]\rangle\|_2\leq \varepsilon.\ee
Fully analogous to above, for any $\varepsilon$-approximately $t$-sparse state $|\varphi\rangle$ with $\varepsilon\leq 0.5$ there always exists a $t$-sparse \emph{normalized} state $|\varphi'\rangle$ which is $O(\varepsilon)$-close to $|\varphi\rangle$. The state $|\varphi'\rangle:=  |\varphi[t]\rangle/\||\varphi[t]\rangle\|_2$ does the job.

Let $|\varphi\rangle$ be an $n$-qubit pure state and let ${\cal P}$ be the probability distribution arising from measuring  all qubits of $|\varphi\rangle$ in the computational basis. We may then ask whether ${\cal P}$ is approximate sparse or whether the full state $|\varphi\rangle$ is approximately sparse, where in the former case closeness is measured w.r.t. total variation distance and in the latter case it is measured w.r.t. $\ell_2$ distance. Next we show that both notions of approximate sparseness are equivalent up to a square-root rescaling of the accuracy $\varepsilon$ (which is mostly harmless if one is ultimately interested in $\varepsilon =1/$poly$(n)$, as we will mostly be in this paper).
\begin{lemma}\label{thm_sparse_equiv}
Let $|\varphi\rangle$ be an $n$-qubit pure state and let ${\cal P}$ be the probability distribution arising from measuring  all qubits of $|\varphi\rangle$ in the computational basis. Then $|\varphi\rangle$ is $\sqrt{\varepsilon}$-approximately $t$-sparse (relative to the $\ell_2$-distance, as above) iff ${\cal P}$ is $\varepsilon$-approximately $t$-sparse (relative to the total variation distance, as above).
\end{lemma}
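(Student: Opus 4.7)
The plan is to exploit the characterizations of approximate sparseness already established in \cref{eq:P_t} and \cref{varphi_t}, together with the fact that the optimal ``top-$t$'' truncation is the same set for both the amplitudes of $|\varphi\rangle$ and the probabilities of ${\cal P}$.

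First I would note the key observation: since $p_x = |\langle x|\varphi\rangle|^2$ and the map $z\mapsto z^2$ is monotonically increasing on $[0,\infty)$, a subset $A_t\subseteq B_n$ selects the $t$ largest probabilities of ${\cal P}$ if and only if it selects the $t$ largest amplitudes (in modulus) of $|\varphi\rangle$. Hence one may choose the same $A_t$ in the definitions of ${\cal P}[t]$ and $|\varphi[t]\rangle$.

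The main step is then a direct computation. With this common choice of $A_t$,
\begin{equation}
\|{\cal P} - {\cal P}[t]\|_1 = \sum_{x\notin A_t} p_x = \sum_{x\notin A_t} |\langle x|\varphi\rangle|^2 = \||\varphi\rangle - |\varphi[t]\rangle\|_2^2 ,
\end{equation}
where the first equality uses the definition of ${\cal P}[t]$ (the restriction of ${\cal P}$ to $A_t$) and the nonnegativity of probabilities, and the last equality uses the definition of $|\varphi[t]\rangle$ and orthonormality of the computational basis. Invoking the characterizations \cref{eq:P_t} and \cref{varphi_t}, we conclude that ${\cal P}$ is $\varepsilon$-approximately $t$-sparse iff $\|{\cal P} - {\cal P}[t]\|_1 \leq \varepsilon$ iff $\||\varphi\rangle - |\varphi[t]\rangle\|_2 \leq \sqrt{\varepsilon}$ iff $|\varphi\rangle$ is $\sqrt{\varepsilon}$-approximately $t$-sparse.

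There is no real obstacle here: the only subtlety worth a sentence in the write-up is the non-uniqueness of $A_t$ when ties occur among the amplitudes, which is handled by observing that ties in $|\langle x|\varphi\rangle|$ correspond exactly to ties in $p_x$, so any valid choice of ``top-$t$'' set for one object is also valid for the other, and the quantities $\|{\cal P}-{\cal P}[t]\|_1$ and $\||\varphi\rangle-|\varphi[t]\rangle\|_2$ are independent of the particular choice among tied options.
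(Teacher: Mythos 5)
Your argument is correct and matches the paper's own proof essentially step for step: both fix a common top-$t$ set $A_t$ (valid because $p_x=|\langle x|\varphi\rangle|^2$ respects the amplitude ordering), observe that $\|{\cal P}-{\cal P}[t]\|_1$ and $\||\varphi\rangle-|\varphi[t]\rangle\|_2^2$ both equal $\sum_{x\notin A_t}|\langle x|\varphi\rangle|^2$, and then invoke the characterizations \cref{eq:P_t} and \cref{varphi_t}. Your extra remark on ties is a harmless refinement of what the paper leaves implicit.
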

\begin{proof} Define $p_x = |\langle x|\varphi\rangle|^2$ for all $x$. As above, let $A_t$ be a set of $t$ $n$-bit string satisfying $p_x\geq p_y$ for all $x\in A_t$ and $y\notin A_t$. This is (trivially) equivalent to $|\langle x|\varphi\rangle|\geq |\langle y|\varphi\rangle|$ for all $x\in A_t$ and $y\notin A_t$. Let ${\cal P}[t]$ denote the restriction of ${\cal P}$ to $A_t$ and similarly $|\varphi[t]\rangle$ is the restriction of $|\varphi\rangle$ to $A_t$. Recall that $|\varphi\rangle$ is $\sqrt{\varepsilon}$-approximately $t$-sparse iff $\||\varphi\rangle - |\varphi[t]\rangle\|_2\leq \sqrt{\varepsilon}$ and that ${\cal P}$ is $\varepsilon$-approximately $t$-sparse iff $\|{\cal P} - {\cal P}[t]\|_1\leq \varepsilon$. A straightforward application of definitions now shows that $\||\varphi\rangle - |\varphi[t]\rangle\|^2 =\|{\cal P} - {\cal P}[t]\|$, since both expressions coincide with \be \sum_{x\notin A_t} |\langle x|\varphi\rangle|^2.\ee This shows that $\||\varphi\rangle - |\varphi[t]\rangle\|_2\leq \sqrt{\varepsilon}$ iff $\|{\cal P} - {\cal P}[t]\|_1\leq \varepsilon$.
\end{proof}

\subsection{Sparse distributions have large coefficients}

The next lemma shows that, for an approximately sparse probability distribution, the `small' probabilities can be ignored without introducing much error. This property will be important in the proof of our main results, in combination with \Cref{thm_compute_large_coeff} which states that the large probabilities can be efficiently computed for certain distributions. The following lemma is also closely related to \cite[Lemma 3.11]{KM91}
\begin{lemma} \label{lem:largecoeffs_prob}
Let ${\cal P}=\{p_x: x\in B_n\}$ be an $\varepsilon$-approximately $t$-sparse probability distribution. Define $B_{\varepsilon, t}$ to be the subset of all bit strings $x$ such that $p_x \geq \varepsilon/t$. Define the subnormalized distribution ${\cal Q}_{\varepsilon, t}$ to be the restriction of ${\cal P}$ to $B_{\varepsilon, t}$. Then ${\cal Q}_{\varepsilon, t}$  is $O(\varepsilon)$-close to ${\cal P}$. More precisely $\|{\cal Q}_{\varepsilon, t} - {\cal P}\|_1\leq 2\varepsilon$.
\end{lemma}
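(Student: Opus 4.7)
The plan is to bound $\|\mathcal{P}-\mathcal{Q}_{\varepsilon,t}\|_1$ directly as a sum of probabilities over $x \notin B_{\varepsilon,t}$, and then split that sum according to whether $x$ lies in the ``top-$t$'' support set $A_t$ introduced in Section~4.2. By definition of $\mathcal{Q}_{\varepsilon,t}$, we have $\|\mathcal{P}-\mathcal{Q}_{\varepsilon,t}\|_1 = \sum_{x\notin B_{\varepsilon,t}} p_x$, so everything reduces to controlling this tail sum.

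First I would decompose the tail as
\begin{equation*}
\sum_{x \notin B_{\varepsilon,t}} p_x \;=\; \sum_{x \in A_t \setminus B_{\varepsilon,t}} p_x \;+\; \sum_{x \notin A_t,\, x \notin B_{\varepsilon,t}} p_x.
\end{equation*}
For the first piece, the summation ranges over at most $|A_t|=t$ bit strings, each satisfying $p_x < \varepsilon/t$ (since $x \notin B_{\varepsilon,t}$), so the sum is bounded by $t \cdot (\varepsilon/t) = \varepsilon$. For the second piece, I would simply enlarge it to $\sum_{x \notin A_t} p_x$ and invoke \cref{eq:P_t}: since $\mathcal{P}$ is $\varepsilon$-approximately $t$-sparse, $\|\mathcal{P}-\mathcal{P}[t]\|_1 \leq \varepsilon$, and expanding the latter norm exactly yields $\sum_{x \notin A_t} p_x \leq \varepsilon$ (because $\mathcal{P}[t]$ agrees with $\mathcal{P}$ on $A_t$ and vanishes elsewhere). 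Adding these two bounds gives $2\varepsilon$, which is the claim.

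There is no real obstacle here: the argument is an essentially two-line bookkeeping step that combines the obvious ``threshold'' bound on small probabilities with the characterization of approximate sparseness via $\mathcal{P}[t]$ established in Section~4.2. The only point requiring a line of justification is the identity $\|\mathcal{P}-\mathcal{P}[t]\|_1 = \sum_{x\notin A_t} p_x$, which follows immediately from the definition of $\mathcal{P}[t]$ in~\eqref{restriction}. I would note that the bound is tight up to the constant $2$: a distribution concentrating mass $\varepsilon$ on $t$ items each of weight $\varepsilon/t$ outside of $A_t$ already saturates one of the two contributions, showing the split is the right way to think about the tail.
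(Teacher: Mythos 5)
Your proof is correct and takes essentially the same route as the paper's: both bound the error by the mass of ${\cal P}$ outside $A_t$ (at most $\varepsilon$ by \cref{eq:P_t}) plus the mass of the at most $t$ strings of $A_t$ that fall below the threshold $\varepsilon/t$ (at most $t\cdot\varepsilon/t=\varepsilon$). The only difference is presentational: you bound $\|{\cal P}-{\cal Q}_{\varepsilon,t}\|_1$ directly as the tail sum $\sum_{x\notin B_{\varepsilon,t}}p_x$, whereas the paper introduces an auxiliary thresholded distribution ${\cal P}'$ and then uses that its support is contained in $B_{\varepsilon,t}$.
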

\begin{proof}
Let $A_t\subseteq B_n$ and ${\cal P}[t]$ be defined as in \Cref{sect_sparse_basic}. Recall that $\|{\cal P}[t] - {\cal P}\|_1\leq \varepsilon$ owing to the approximate sparseness of ${\cal P}$.
Furthermore construct ${\cal P}'$ as follows: start from ${\cal P}[t]$ and set all probabilities with magnitudes $\leq \frac{\varepsilon}{t}$ to zero; let $C$ denote the support of ${\cal P}'$. Note that $C\subseteq A_t$ and thus $|A_t\setminus C|\leq t$. Furthermore $p_x\leq  \varepsilon/t$ for every $x\in A_t\setminus C$. Then
\be\label{B_epsilon_t}
\norm{{\cal P}-{\cal P}'}_1 &\leq&  \norm{{\cal P}-{\cal P}[t]}_1 + \norm{{\cal P}[t]-{\cal P}'}_1 \nonumber \\
&\leq &\varepsilon + \sum_{x\in A_t\setminus C} p_x
\leq \varepsilon + t\cdot \frac{\varepsilon}{t} = 2\varepsilon.
\ee
Note also that $C\subseteq B_{\varepsilon, t}$ since $p_x \geq \varepsilon/t$ for all $x\in C$. Thus both ${\cal P}'$ and ${\cal Q}_{\varepsilon, t}$ are restrictions of ${\cal P}$, and that the support $B_{\varepsilon, t}$ of ${\cal Q}_{\varepsilon, t}$  contains the support $C$ of ${\cal P}'$. This implies that $\norm{{\cal P}-{\cal Q}_{\varepsilon, t}}_1\leq \norm{{\cal P}-{\cal P}'}_1$. Together with (\ref{B_epsilon_t}) this proves the result.
\end{proof}
An analogous result holds for approximately sparse quantum states. We do not make it explicit here since it will not be needed in our proofs of the main results.

\section{Additively approximable probability distributions}\label{sect_chernoff}

\subsection{Definition and basic properties}

The Chernoff-Hoeffding bound is a basic tool in probability theory which will be used in this work. Whereas the bound is usually stated for real-valued random variables, here we state a simple generalization to the complex-valued case, which follows from the real-valued case by bounding real and imaginary parts of independently.
\begin{lemma}[Chernoff-Hoeffding bound]\label{thm:chernoff}
Let $X_1,\dots,X_T$ be i.i.d. complex-valued random variables with $E := \mathbf{E}X_i$ and $|X_i| \leq 1$ for every $i=1,\dots,T$. Then with $T=\frac{4}{\varepsilon^2} \log(\frac{4}{\delta})$ we have
\[
\Pr\left\{\left|\frac{1}{T}\sum_{i=1}^{T} X_i - E \right| \leq \varepsilon \right\} \geq 1 - \delta
\]
\end{lemma}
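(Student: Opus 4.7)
The plan is to reduce the complex-valued statement to the standard real-valued Hoeffding inequality by handling the real and imaginary parts separately, as the remark preceding the lemma essentially suggests.

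First, I would decompose $X_i = Y_i + i Z_i$ with $Y_i := \operatorname{Re}(X_i)$ and $Z_i := \operatorname{Im}(X_i)$. Since $|X_i|\leq 1$ implies $|Y_i|,|Z_i|\leq 1$, both $(Y_i)$ and $(Z_i)$ are i.i.d.\ real-valued sequences with entries in $[-1,1]$, having means $\operatorname{Re}(E)$ and $\operatorname{Im}(E)$ respectively. The standard real-valued Hoeffding bound then gives, for any $\eta>0$,
\[
\Pr\Big[\Big|\tfrac{1}{T}\sum_{i=1}^T Y_i - \operatorname{Re}(E)\Big| > \eta\Big]\;\leq\; 2\,e^{-T\eta^2/2},
\]
and the analogous inequality for the average of the $Z_i$.

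Next, I would apply this estimate with $\eta := \varepsilon/\sqrt{2}$ to both coordinate averages, and observe that whenever both approximations succeed simultaneously, the Pythagorean identity yields
\[
\Big|\tfrac{1}{T}\sum_i X_i - E\Big|^2 \;=\; \Big|\tfrac{1}{T}\sum_i Y_i - \operatorname{Re}(E)\Big|^2 + \Big|\tfrac{1}{T}\sum_i Z_i - \operatorname{Im}(E)\Big|^2 \;\leq\; \eta^2 + \eta^2 \;=\; \varepsilon^2,
\]
so the complex empirical mean is within $\varepsilon$ of $E$. A union bound over the two coordinate-wise failure events then bounds the overall failure probability by $4 e^{-T\varepsilon^2/4}$. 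Requiring $4 e^{-T\varepsilon^2/4}\leq \delta$ and solving for $T$ yields precisely $T \geq \frac{4}{\varepsilon^2}\log(4/\delta)$, matching the hypothesis of the lemma.

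There is no substantive obstacle in this argument; it is standard bookkeeping once real-valued Hoeffding is invoked. The only thing that has to be tracked carefully is the splitting of the total accuracy between the two coordinates (taking $\eta = \varepsilon/\sqrt{2}$) and of the confidence $\delta$ between the two bad events (each allotted $\delta/2$ through the union bound), so that the constants in the stated $T = \frac{4}{\varepsilon^2}\log(4/\delta)$ come out exactly as claimed rather than with suboptimal factors.
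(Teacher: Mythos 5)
Your proposal is correct and follows essentially the same route as the paper's own proof: split into real and imaginary parts, apply the real-valued Chernoff--Hoeffding bound to each with accuracy $\varepsilon/\sqrt{2}$, combine via the Pythagorean identity and a union bound to get failure probability $4e^{-T\varepsilon^2/4}$, and solve for $T$. Nothing further is needed.
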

\noindent A proof of \Cref{thm:chernoff} can be found in \Cref{sec:app_chernoff}. The main application of the Chernoff bound used in this work will be in the following context. Let $F:B_n\to \mathbbm{C}$ be an efficiently computable complex function (i.e. computable in polynomial time on a deterministic classical computer) satisfying $|F(x)|\leq 1$ for all $x\in B_n$ and let ${\cal P}:=\{p_x:x\in B_n\}$ be a probability distribution on the set of $n$-bit strings which can be sampled in poly$(n)$ time on a randomized classical computer. Then a direct application of the Chernoff-Hoeffding bound shows that there exists a classical randomized algorithm to estimate
\begin{equation}
\langle F\rangle:= \sum p_x F(x) \label{F}
\end{equation}
with error $\varepsilon$ and probability at least $1-\delta$ in poly$(n, \frac{1}{\varepsilon}, \log\frac{1}{\delta})$ time. This means that in poly$(n)$ time it is possible to achieve an accuracy $\varepsilon= 1/$poly$(n)$ and exponentially small failure probability $\delta= 2^{-\mbox{\scriptsize{poly}(n)}}$.

Next we introduce a definition for functions that are approximable with randomized classical algorithms having a performance in terms of error $\varepsilon$ and failure probability $\delta$ that is analogous to those obtained by applying the Chernoff bound (see also \cite{BFLW05} for a related notion of additive approximations).

\begin{definition}\label{Chernoff_comp}
A function $f:B_n\to\mathbbm{C}$ is said to be additively approximable if their exists a randomized classical algorithm with runtime poly$(n, 1/\varepsilon, \log\frac{1}{\delta})$ which, on input of an $n$-bit bit string $x$, outputs with probability at least  $1-\delta$ an $\varepsilon$-approximation of $f(x)$. A probability distribution ${\cal P}= \{p_x\}$ on the set of $n$-bit strings is said to be additively approximable if the function $x\to p_x$ is additively approximable.
\end{definition}
Note that any ${\cal P}$ which can be sampled classically in poly$(n)$ time is additively approximable since each individual probability can essentially be computed by sampling the distribution. More precisely,  to estimate $p_x$, write $p_x = \sum \delta(x, y) p_y$ where $\delta(x, y)$ equals $1$ if $x=y$ and 0 otherwise. We have thus rewritten $p_x$ as the expectation value of $F\equiv \delta (x, \cdot)$ which is a poly$(n)$-time computable function satisfying $|F(x)|\leq 1$ for all $x\in B_n$. The discussion above \Cref{Chernoff_comp} then immediately implies that ${\cal P}$ is additively approximable.

In the example discussed in \cref{F} we found that $\langle F\rangle$ can be efficiently approximated provided that $F$ was efficiently computable on a deterministic computer. In the following lemma it is shown that the same performance in estimating $\langle F\rangle$ can be achieved even when $F$ is only additively approximable. The argument is a basic application of the Chernoff bound.

\begin{lemma}\label{thm_approximate_F}
Let $F:B_n\to \mathbbm{C}$ be an additively approximable function and let ${\cal P}:=\{p_x: x\in B_n\}$ be a probability distribution  which can be sampled in poly$(n)$ time on a classical computer. Then there exists a classical randomized algorithm to estimate $\langle F\rangle:= \sum p_x F(x)$ with error $\varepsilon$ and probability $1-\delta$ in poly$(n, \frac{1}{\varepsilon}, \log\frac{1}{\delta})$ time.
\end{lemma}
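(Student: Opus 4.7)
The plan is to combine a standard Monte Carlo average with the additive approximability of $F$, handling the two sources of error separately and combining them via a union bound.

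First I would draw $T$ i.i.d.\ samples $x_1,\dots,x_T$ from ${\cal P}$, which is possible in poly$(n)$ time per sample by hypothesis. For an oracle that could evaluate $F$ exactly, setting $X_i := F(x_i)$ and invoking \Cref{thm:chernoff} with $T = O(\varepsilon^{-2}\log(1/\delta))$ would yield a $(\varepsilon/2)$-approximation of $\langle F\rangle$ with failure probability at most $\delta/2$. Since $F$ is only additively approximable, I would instead compute, for each sampled $x_i$, a classical estimate $\tilde F(x_i)$ of $F(x_i)$ with accuracy $\varepsilon/4$ and failure probability $\delta/(2T)$; by the definition of additive approximability this takes poly$(n, 1/\varepsilon, \log(T/\delta)) = \text{poly}(n, 1/\varepsilon, \log(1/\delta))$ time per sample. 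The algorithm then outputs
\[
\widehat{\langle F\rangle} \;:=\; \frac{1}{T}\sum_{i=1}^{T} \tilde F(x_i).
\]

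For the error analysis, I would split the deviation as
\[
\bigl|\widehat{\langle F\rangle} - \langle F\rangle\bigr| \;\leq\; \frac{1}{T}\sum_{i=1}^{T}\bigl|\tilde F(x_i)-F(x_i)\bigr| \;+\; \biggl|\frac{1}{T}\sum_{i=1}^{T} F(x_i) - \langle F\rangle\biggr|.
\]
A union bound over the $T$ independent calls to the additive-approximation subroutine ensures that, with probability at least $1-\delta/2$, the first term is at most $\varepsilon/4 < \varepsilon/2$. Independently, \Cref{thm:chernoff} (applied with $|F(x)|\leq 1$, which is implicit from the Chernoff context of the preceding paragraph, or otherwise by rescaling) bounds the second term by $\varepsilon/2$ with probability at least $1-\delta/2$. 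A final union bound yields total error at most $\varepsilon$ with probability at least $1-\delta$, as desired.

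The argument is essentially bookkeeping rather than a deep technical step; the only real subtlety is to choose the per-sample approximation failure probability small enough (namely $\delta/(2T)$) that the union bound over all $T$ subroutine calls still leaves a failure budget of $\delta/2$ for Chernoff. Because this extra factor enters only logarithmically in $T$, the final runtime remains poly$(n, 1/\varepsilon, \log(1/\delta))$. The main conceptual point to highlight is that additive approximability composes with sampling in exactly the same quantitative regime as exact evaluation does, so the more general statement is obtained at essentially no cost compared to the discussion above \Cref{Chernoff_comp}.
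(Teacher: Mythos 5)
Your proposal is correct and follows essentially the same route as the paper's proof: sample from ${\cal P}$, apply the Chernoff--Hoeffding bound to the exact averages, replace each $F(x_i)$ by an additive approximation with per-call failure probability scaled by the number of samples, and combine via the triangle inequality and a union bound (the paper even makes the same implicit boundedness assumption on $F$ that you flag). No substantive differences.
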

\begin{proof}
By generating $K=O(\frac{1}{\varepsilon^2}\log \frac{1}{\delta})$ bit strings $x^1, \dots, x^K$ from the distribution ${\cal P}$, the inequality \be\label{chernoff_prop} \left|\frac{1}{K} \sum_{i=1}^K F(x^i) - \langle F\rangle\right|\leq \varepsilon/2\ee holds with probability at least $1-\delta/2$, owing to the Chernoff bound. Then, for each $x^i$ we compute a complex number  $c_i$ satisfying  $|c_i - F(x^i)|\leq \frac{\varepsilon}{2}$ with probability at least $1- \delta/(2K)$. Since $F$ is additively approximable, each $c_i$ can be computed in time \be\begin{array}{c} T= \mbox{poly} (n, \frac{2}{\varepsilon}, \log \frac{2K}{\delta}) = \mbox{poly} (n, \frac{1}{\varepsilon}, \log \frac{1}{\delta}).\end{array}\ee   Thus the total runtime of computing all values $c_i$ is $KT = $ poly$(n, \frac{1}{\varepsilon}, \log \frac{1}{\delta}).$ The total probability that each $c_i$ is $\frac{\varepsilon}{2}$-close to $F(x^i)$ \emph{and} that (\ref{chernoff_prop}) holds is at least
\be\begin{array}{c} (1- \frac{\delta}{2})\cdot (1-\frac{\delta}{2K})^K\geq (1- \frac{\delta}{2})\cdot (1-\frac{\delta}{2})\geq 1-\delta\end{array}\ee where we have repeatedly used that $(1-a)^r\geq 1-ra$ for all positive integers $r$ and for all $a\in[0, 1]$.   It follows that, with probability at least $1-\delta$, we have \be |\frac{1}{K} \sum_{i=1}^K c_i - \langle F\rangle|\leq \varepsilon\ee by using the triangle inequality.

\end{proof}

\subsection{Estimating large coefficients}
The following theorem contains the property of additive approximations which is most important for our purposes. It is a statement that, for distributions which are additively approximable and for which also (a designated subset of) the marginals are additively approximable, there exists an efficient algorithm to determine  those probabilities which are larger than some given threshold value. The proof technique is a type of binary search algorithm which is a direct generalization of the proof of the Kushilevitz-Mansour algorithm \cite{KM91}.
\begin{theorem}\label{thm_compute_large_coeff}
Let ${\cal P}= \{p_x: x\in B_k\}$ be a probability distribution. Let ${\cal P}_m$ denote the marginal probability distribution of the first $m$ bits, for every $m$ ranging from 1 to $k$ (with ${\cal P}_k\equiv {\cal P}$). Suppose that all distributions ${\cal P}_m$ are additively approximable. Then the following holds: given $\theta, \pi>0$, there exists a randomized classical algorithm with runtime poly$(k, \frac{1}{\theta}, \log\frac{1}{\pi})$ which outputs a list $L =\{x^1, \dots, x^l\}$ where $l\leq 2/\theta$ and where each $x^i$ is an $k$-bit string such that, with probability at least $1-\pi$:
\begin{itemize}
\item[(a)] for all $y\in L$, it holds that $p(y)\geq \frac{\theta}{2}$; \label{it:lc_a}
\item[(b)] every $k$-bit string $x$ satisfying  $p(x)\geq \theta$ belongs to the list $L$; \label{it:lc_b}
\end{itemize}

\end{theorem}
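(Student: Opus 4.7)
The plan is a level-by-level binary search on the prefix tree of $k$-bit strings, invoking the hypothesis that every marginal $\mathcal{P}_m$ is additively approximable. Write $p_m(y) := \Pr_{\mathcal{P}_m}[y]$ for $y \in B_m$; note that $p_m(y) = \sum_{x : x_{1:m}=y} p(x)$ and in particular that if $p(x) \geq \theta$ for some full string $x$, then for every prefix $y$ of $x$ one has $p_m(y) \geq \theta$. Thus the set of heavy prefixes forms a subtree of the binary tree on $\{0,1\}^{\leq k}$.

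The algorithm maintains, for each $m = 0, 1, \ldots, k$, a list $L_m \subseteq B_m$ of candidate prefixes. Starting from $L_0 = \{\emptyset\}$, we form $L_m$ from $L_{m-1}$ as follows: for each $y \in L_{m-1}$ and each $b \in \{0,1\}$, call the additive approximator for $\mathcal{P}_m$ on input $yb$ with accuracy $\varepsilon = \theta/4$ and failure probability $\delta' = \pi \theta / (8k)$. Include $yb$ in $L_m$ iff the returned estimate is $\geq 3\theta/4$. Output $L := L_k$. If every approximation call succeeds (i.e.\ returns a value within $\theta/4$ of the true $p_m(yb)$), then the threshold $3\theta/4$ guarantees: (i) every $yb$ included has true probability $p_m(yb) \geq \theta/2$; (ii) every $yb$ with $p_m(yb) \geq \theta$ is included. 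Property (ii) at every level, combined with the prefix observation above, implies that every $x$ with $p(x) \geq \theta$ survives all $k$ rounds, yielding (b); property (i) at the final level gives (a).

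The main obstacle is bounding the number of oracle calls so that the runtime stays polynomial in $k, 1/\theta, \log(1/\pi)$. Here we use the standard counting argument: at any fixed level $m$, at most $2/\theta$ prefixes $y$ can satisfy $p_m(y) \geq \theta/2$, since $\sum_y p_m(y) = 1$. Inductively, if the first $m-1$ rounds all succeed then $|L_{m-1}| \leq 2/\theta$, so at level $m$ we make at most $4/\theta$ approximation calls, giving at most $4k/\theta$ calls total. A union bound over these calls with per-call failure $\delta' = \pi/(4k/\theta \cdot 2) = \pi \theta/(8k)$ bounds the total failure probability by $\pi$. Each call runs in time $\mathrm{poly}(m, 1/\varepsilon, \log(1/\delta')) = \mathrm{poly}(k, 1/\theta, \log(1/\pi))$ by the additive approximability hypothesis, so the overall runtime is $\mathrm{poly}(k, 1/\theta, \log(1/\pi))$, and $|L| \leq 2/\theta$ as required.

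The one subtlety to check carefully is the inductive step for the size bound: a priori $|L_m|$ could inflate if many "false positives" slip past the $3\theta/4$ threshold. However, in the event that all calls are accurate (which is the event we are conditioning on when proving both correctness and the size bound), every $y \in L_m$ has true $p_m(y) \geq \theta/2$, so the counting argument $|L_m| \leq 2/\theta$ applies uniformly in $m$, closing the induction. This is exactly the Kushilevitz--Mansour heavy-prefix recursion, with "Fourier coefficient estimation via random sampling" replaced by the black-box additive approximator for $\mathcal{P}_m$ supplied by the theorem's hypothesis.
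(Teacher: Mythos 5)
Your proposal follows essentially the same route as the paper's proof: the Kushilevitz--Mansour prefix recursion with estimation accuracy $\theta/4$, acceptance threshold $3\theta/4$, the monotonicity $p_m(y)\geq p_{m+1}(yb)$ for the induction giving (b), and the normalization counting argument giving (a) and the bound $|L_m|\leq 2/\theta$.

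One point is handled only partially. You bound the number of oracle calls (and hence the runtime) by conditioning on the event that every call is accurate; but the theorem's runtime claim should hold for the algorithm as such, not merely on the success event, and your union bound over ``at most $4k/\theta$ calls'' is itself circular since the number of calls is only bounded when earlier calls succeeded. If some estimates fail, false positives can double the list at every level, so without further safeguards the worst-case number of calls is of order $2^k$. The paper closes this by an explicit halting rule: if any $L_m$ ever exceeds $2/\theta$ elements, the algorithm aborts and sets the remaining lists empty, which caps the number of estimations at $O(k/\theta)$ unconditionally (and makes the union bound legitimate); on the success event the cap is never triggered, so properties (a) and (b) are unaffected. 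Adding this one-line modification (or an equivalent ``first failure'' stopping argument) makes your proof complete.
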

\begin{proof}
For any integer $m\leq k$ we denote by $p(x_1\cdots x_m)$ the marginal probability of the bit string $x_1\cdots x_m$. We point out the basic fact that
\begin{equation}
p(x_1\cdots x_{m-1})\geq p(x_1\cdots x_{m-1} x_m) \label{eq:basic_fact}
\end{equation} for all $m$ and for all $x_j$'s.

The algorithm will consist of $k$ steps. In each step we construct a list $L_m$ containing a certain collection of $m$-bit strings, where $m$ ranges from 1 to $k$. The final list $L_k$ will satisfy (a)-(b) with probability at least $1-\pi$. In the algorithm we will repeatedly invoke that each ${\cal P}_m$ is additively approximable; whenever an additive approximation of any ${\cal P}_m$ will be considered, we will set the required probability of success to be at least $1-\delta$ with $\delta:= \theta\pi/2k$ and the accuracy to be $\varepsilon:=\theta/4$. Each single estimate of such a probability can be done in time \be N_{\mbox{\scriptsize{single}}} = \mbox{ poly}(k, \frac{1}{\varepsilon}, \log\frac{1}{\delta}) = \mbox{ poly}(k, \frac{1}{\theta}, \log\frac{1}{\pi}).\ee

{\bf Step 1.} The list $L_1\subseteq B_1\equiv\{0, 1\}$ is computed as follows. We use that ${\cal P}_1$ is additively approximable and compute $p(0)$ (i.e. the probability of the outcome 0 on the first bit). More formally, we compute a number $c(0)$  satisfying \be | c(0) - p(0)|\leq \theta/4\ee with probability at least $1-\delta$. If $c(0)\geq 3\theta/4$ then define the bit $0$ to belong to the list $L_1$. Analogously we compute $c(1)$ as an approximation of $p(1)$ and add the bit $1$ to $L_1$ if $c(1)\geq 3\theta/4$.

{\bf Step 2.} To compute the list $L_2\subseteq B_2 \equiv \{00, 01, 10, 11\}$ we use that ${\cal P}_2$ is additively approximable as follows. For every $x\in L_1$ and $u\in\{0, 1\}$ we compute an $\theta/4$-approximation of $p(xu)$ with  probability at least $1-\delta$, yielding a number $c(xu)$ in analogy to Step 1. If $c(xu)\geq 3\theta/4$ then we add the bit pair $xu$ to the list $L_2$.

{\bf Steps 3-k.} The above procedure is continued for all $m=3\cdots k$ where in the $m$-th step we use that ${\cal P}_m$ is additively approximable. To compute the list $L_m\subseteq B_m$, for every $x_1\cdots x_{m-1}\in L_{m-1}$ and $u\in\{0, 1\}$ we compute $c(x_1\cdots x_{m-1}u)$, which is an $\theta/4$-approximation of $p(x_1\cdots x_{m-1}u)$ with probability at least $1-\delta$. If $c(x_1\cdots x_{m-1}u)\geq 3\theta/4$ then we add the bit string $x_1\cdots x_{m-1}u$ to the list $L_m$.

Finally, if at some point in the above algorithm one of the lists $L_m$ contains strictly more than $2/\theta$ elements, the algorithm is halted and all subsequent lists $L_{m+1}, \dots, L_k$ are defined to be empty. With this extra constraint, we ensure  that at most $2k/\theta$ probabilities are estimated.  It follows that the total runtime of the algorithm is \be \frac{2k}{\theta} \cdot N_{\mbox{\scriptsize{single}}}= \mbox{ poly}(k, \frac{1}{\theta}, \log\frac{1}{\pi}).\ee Furthermore, since at most $2k/\theta$ probabilities are estimated, each succeeding with probability $1-\delta$,  the probability that all estimates succeed is at least $(1 - \delta)^{\frac{2k}{\theta}}\geq 1 - \frac{2k}{\theta}\delta= 1-\pi$.

From this point on we consider the case that all estimates succeed, and claim that  in this case the list $L_k$ satisfies (a)-(b). We make the following observations. First, for every $m$ we prove property (a'):
{\it For all $x_1\cdots x_m\in L_m$ it holds that $p(x_1\cdots x_m)\geq \theta/2$.}
This is true since $c(x_1\cdots x_m)$ is an $\frac{\theta}{4}$-approximation of $p(x_1\cdots x_m)$ and since $x_1\cdots x_m$ was only added to $L_m$ if $c(x_1\cdots x_m)\geq 3\theta/4$.
Property (a') implies that the list $L_k$ satisfies (a). Furthermore, property (a') implies that every list $L_m$ contains at most $2/\theta$ bit strings (since probability distributions are normalized to sum up to 1). This shows that, as long as all estimates of the probabilities are successful, the halting procedure described above need never be applied (indeed, the latter is only incorporated in the algorithm to ensure that successive failed estimations of probabilities do not result in an (exponentially) long runtime).

Second, we argue that each  $L_m$ satisfies property (b'):
{\it If $p(x_1\cdots x_m)\geq \theta$ then $x_1\cdots x_m\in L_m$.}
To see this, we argue by induction on $m$. For $m=1$, property (b') follows immediately from the definition of $L_1$. Furthermore suppose that $y=y_1\cdots y_{m}$ satisfies $p(y)\geq \theta$. Then, using \cref{eq:basic_fact} we have $p(y_1\cdots y_{m-1})\geq \theta$ and thus, by induction, we have $y_1\cdots y_{m-1}\in L_{m-1}$. The definition of $L_m$ now immediately implies that $y_1\cdots y_{m}\in L_m$. This shows that property (b') holds for all $L_m$, so that $L_k$ satisfies (b) as desired.
\end{proof}

\section{Algorithm for additively approximable, approximately sparse distributions}\label{sect_algorithm}

We now arrive at an efficient algorithm which, on input of a probability distribution ${\cal P}$ which is promised to be approximately sparse \emph{and} which satisfies the  conditions of \Cref{thm_compute_large_coeff}, outputs an (exactly) sparse distribution ${\cal P}'$ which is close to ${\cal P}$. In addition, the distribution ${\cal P}'$ can be sampled efficiently. The proof will be obtained by combining \Cref{thm_compute_large_coeff} and \Cref{lem:largecoeffs_prob}. The argument is straightforward but somewhat tedious since some care is required in choosing suitable epsilons and deltas. We also note that Theorem \ref{thm_central} is closely related to theorem 3.11 in \cite{KM91}, which provides a randomized classical algorithm for computing representations of Boolean functions which are promised to be approximately sparse.

\begin{theorem}\label{thm_central}
Let ${\cal P}$ be a distribution on $B_k$ which satisfies the following conditions:

\begin{itemize}
\item[(i)] ${\cal P}$ is promised to be $\varepsilon$-approximately $t$-sparse, where $\varepsilon\leq 1/6$.
\item[(ii)] ${\cal P}$ and its marginals ${\cal P}_m$ ($m=1, \dots, k$) are additively approximable as in \Cref{thm_compute_large_coeff}.
\end{itemize}
Then there exists a randomized classical algorithm with runtime poly$(k, t, \frac{1}{\varepsilon}, \log \frac{1}{\delta})$ which outputs (by means of listing all nonzero probabilities)  an $s$-sparse probability distribution ${\cal P}'=\{p_x'\}$ where $s=O(t/\varepsilon)$  such that, with probability at least $1-\delta$,  ${\cal P}'$ is $O(\varepsilon)$-close to ${\cal P}$ (more precisely $\|{\cal P}-{\cal P}'\|_1\leq 12\varepsilon$). Furthermore, $p_x'\geq \varepsilon/8t$ for all $p_x'$ which are nonzero. Finally, it is possible to sample ${\cal P}'$ on a classical computer in  poly$(k, t, 1/\varepsilon)$ time.
\end{theorem}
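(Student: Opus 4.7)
The plan is to directly combine Theorem \ref{thm_compute_large_coeff} (which finds all heavy probabilities) with Lemma \ref{lem:largecoeffs_prob} (which says that for approximately sparse ${\cal P}$ the heavy probabilities already carry almost all the mass), and then clean up the resulting estimates into an honest probability distribution by simple renormalization. The algorithmic construction is immediate; the content of the proof is the error bookkeeping and the choice of the internal accuracy parameters.

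First I would invoke Theorem \ref{thm_compute_large_coeff} on ${\cal P}$ with threshold $\theta := \varepsilon/t$ and failure probability $\pi := \delta/2$. This produces, in time poly$(k, t/\varepsilon, \log(1/\delta))$, a list $L$ of at most $2t/\varepsilon$ bit strings such that, with probability $\geq 1-\delta/2$, (a) every $y\in L$ has $p_y\geq \varepsilon/(2t)$, and (b) every $x$ with $p_x\geq \varepsilon/t$ lies in $L$. Property (b) says $L\supseteq B_{\varepsilon,t}$ in the notation of Lemma \ref{lem:largecoeffs_prob}, so the subnormalized restriction ${\cal Q}$ of ${\cal P}$ to $L$ satisfies $\|{\cal Q}-{\cal P}\|_1\leq 2\varepsilon$ (since restricting to a bigger set only loses less mass than restricting to $B_{\varepsilon,t}$).

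Next, using the additive approximability of ${\cal P}$, for each $x\in L$ I compute a real estimate $\tilde p_x$ with $|\tilde p_x-p_x|\leq \varepsilon':=\varepsilon/(2|L|)$ and failure probability at most $\delta/(2|L|)$; a union bound makes all $|L|\leq 2t/\varepsilon$ estimates simultaneously accurate with probability $\geq 1-\delta/2$. From now on condition on the event, of probability $\geq 1-\delta$, that both $L$ has properties (a)-(b) and all $\tilde p_x$ are accurate. Since each $p_x\geq \varepsilon/(2t)>\varepsilon'$ for $x\in L$, each $\tilde p_x$ is strictly positive, so the vector $\tilde{\cal Q}$ with entries $\tilde p_x$ on $L$ and $0$ elsewhere is a nonnegative $|L|$-sparse vector with $\|\tilde{\cal Q}-{\cal Q}\|_1\leq |L|\varepsilon'\leq \varepsilon/2$, hence $\|\tilde{\cal Q}-{\cal P}\|_1\leq 5\varepsilon/2$.

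Finally I would set ${\cal P}':=\tilde{\cal Q}/\|\tilde{\cal Q}\|_1$. The triangle inequality gives $|\,\|\tilde{\cal Q}\|_1-1|\leq 5\varepsilon/2$, and the same normalization calculation as in \cref{eq:argument_normalize} yields $\|{\cal P}'-\tilde{\cal Q}\|_1=|\,\|\tilde{\cal Q}\|_1-1|\leq 5\varepsilon/2$, so $\|{\cal P}'-{\cal P}\|_1\leq 5\varepsilon\leq 12\varepsilon$. The per-entry lower bound $p'_x\geq \varepsilon/(8t)$ follows from $\tilde p_x\geq \varepsilon/(2t)-\varepsilon'$ together with $\|\tilde{\cal Q}\|_1\leq 1+5\varepsilon/2\leq 2$ (using $\varepsilon\leq 1/6$), and sampling ${\cal P}'$ in time poly$(k,t,1/\varepsilon)$ is trivial since we have its full list of $O(t/\varepsilon)$ positive entries. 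The main (albeit minor) obstacle is simply making sure $\varepsilon'$ is taken on the order of $\varepsilon^2/t$ so that $|L|\varepsilon'=O(\varepsilon)$ while the total runtime remains polynomial in $k,t,1/\varepsilon,\log(1/\delta)$; everything else is routine.
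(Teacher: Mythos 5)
Your proposal follows essentially the same route as the paper: invoke \Cref{thm_compute_large_coeff} with $\theta=\varepsilon/t$, estimate the probabilities on the resulting list $L$ via additive approximability, use \Cref{lem:largecoeffs_prob} together with $B_{\varepsilon,t}\subseteq L$ to bound the mass outside $L$, and renormalize (your triangle-inequality treatment of the renormalization, via $\|{\cal P}'-\tilde{\cal Q}\|_1=|1-\|\tilde{\cal Q}\|_1|$, is a slightly cleaner variant of the paper's computation and gives an even better constant). The one genuine flaw is your choice of the per-entry accuracy $\varepsilon'=\varepsilon/(2|L|)$: your claim that $p_x\geq \varepsilon/(2t)>\varepsilon'$ for $x\in L$ fails whenever $|L|<t$ (e.g.\ a nearly $1$-sparse ${\cal P}$ with $t$ large gives $\varepsilon'=\varepsilon/2\gg\varepsilon/(2t)$), and then neither the positivity of the $\tilde p_x$ nor the lower bound $p'_x\geq\varepsilon/(8t)$ follows as written. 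What is needed is that $\varepsilon'$ be small compared to $\varepsilon/t$ \emph{uniformly}, not merely that $|L|\varepsilon'=O(\varepsilon)$; the paper takes $\varepsilon'=\min\{\varepsilon/|L|,\,\varepsilon/4t\}$, and your closing suggestion $\varepsilon'=\Theta(\varepsilon^2/t)$ also works (since $|L|\leq 2t/\varepsilon$ and $\varepsilon\leq 1/6$), though you motivate it only by the $\ell_1$ bookkeeping, which $\varepsilon/(2|L|)$ already handled. With that parameter choice corrected, the rest of your argument (union bound over the $\leq 1+|L|$ probabilistic steps, $O(t/\varepsilon)$-sparsity, direct sampling from the explicit list) goes through and matches the paper's proof.
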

\begin{proof} First we invoke \Cref{thm_compute_large_coeff} with  $\theta:=\varepsilon/t$ and
\be
\begin{array}{c}\label{def_pi}
\pi: = \frac{\delta}{ 2t/\varepsilon+1}.
\end{array}\ee
This yields, with probability at least $1-\pi$, a list $L$ of $k$-bit strings satisfying conditions (a)-(b), within a runtime
\be\begin{array}{c}
N_1= \mbox{poly}(k, \frac{1}{\theta}, \log \frac{1}{\pi}) = \mbox{poly}(k, t, \frac{1}{\varepsilon}, \log \frac{1}{\delta}).
\end{array}\ee
Note that $|L|\leq 2t/\varepsilon$.  Second, since ${\cal P}$ is additively approximable, each individual probability $p_x$ with $x\in L$ can be computed with success probability at least $1-\pi$ and with an error $\varepsilon'$ set to \be\varepsilon':=\min\{ \varepsilon/|L|, \varepsilon/4t \}\ee in time
\be\begin{array}{c}
N_2= \mbox{poly}(k, \frac{1}{\varepsilon'}, \log \frac{1}{\pi}) = \mbox{poly}(k, t, \frac{1}{\varepsilon}, \log \frac{1}{\delta}).
\end{array}\ee
This yields a list of numbers $\{c_x: x\in L\}$ such that $|p_x-c_x|\leq \varepsilon'$ for all $x\in L$ if all evaluations were successful. Up to this point, the runtime of the algorithm is $N= N_1 + |L|N_2$ which scales as  poly$(k, t, \frac{1}{\varepsilon}, \log \frac{1}{\delta})$, and the total success probability is at least \be (1-\pi)^{|L|+1}\geq 1- (|L|+1)\pi\geq 1-\delta\ee where we have used (\ref{def_pi}) and the property $|L|\leq 2t/\varepsilon$. From this point on, the entire algorithm proceeds deterministically.

Define $c_x$ to be 0 for all $x\notin L$ and let ${\cal C}= \{c_x: x\in B_k\}$ denote the resulting list of $2^k$ coefficients. Now let ${\cal Q}_{\varepsilon, t}=\{q_x\}$ be the restriction of ${\cal P}$ to $B_{\varepsilon, t}$, where $B_{\varepsilon, t}$ is the set of strings satisfying $p_x\geq\varepsilon/t$, as defined in \Cref{lem:largecoeffs_prob}.  Note that $B_{\varepsilon, t}\subseteq L$ (recall condition (b) of \Cref{thm_compute_large_coeff} and the fact that here $\theta= \varepsilon/t$).  Then \be \|{\cal C} - {\cal P}\|_1 &=& \sum_{x\in L} |c_x - p_x| + \sum_{x\notin L} p_x \leq |L| \cdot\varepsilon' + \sum_{x\notin L} p_x \nonumber \\ &\leq & \varepsilon + \sum_{x\notin L} p_x \leq \varepsilon + \sum_{x\notin B_{\varepsilon, t}} p_x = \varepsilon + \|{\cal P} - {\cal Q}_{\varepsilon, t}\|_1\leq 3\varepsilon.\ee
Here in the first inequality we used that $|c_x-p_x|\leq \varepsilon'$ for all $x\in L$; in the second, we used the definition of $\varepsilon'$; in the third, we used $B_{\varepsilon, t}\subseteq L$; in the equality, we used the definition of ${\cal Q}_{\varepsilon, t}$; finally, we used \Cref{lem:largecoeffs_prob}.

Since $|c_x - p_x|\leq \varepsilon'\leq \varepsilon/4t$ (recall the definition of $\varepsilon')$ and since $p_x\geq \varepsilon/2t$ owing to condition (a) of \Cref{thm_compute_large_coeff}, we have $c_x\geq \varepsilon/4t$ for every $x\in L$; in particular, all $c_x$ are nonnegative.
Finally, we set $ {\cal P}'$ to be ${\cal C}$ divided by its 1-norm $\|{\cal C}\|_1=\sum |c_x|$, so that ${\cal P}'$ is a proper probability distribution. Since ${\cal P}'$ is $|L|$-sparse,  computing ${\cal P}'$ from ${\cal C}$ can be done in $O(|L|) = $poly$(t, 1/\varepsilon)$ time. Putting everything together, the total runtime for computing ${\cal P}'$ scales as poly$(k, t, \frac{1}{\varepsilon}, \log \frac{1}{\delta})$. We now show that ${\cal P}'$ is also $O(\varepsilon)$-close to ${\cal P}$. The argument is straightforward and fully analogous to the one in \Cref{sect_sparse_basic}, cf. (\ref{eq:bound_P_t})-(\ref{eq:argument_normalize}). Since $\|{\cal C}- {\cal P}\|_1\leq 3\varepsilon$ and $\|{\cal P}\|_1=1$ we have \be\label{bound_tilde_P} 1-3\varepsilon\leq \|  {\cal C}\|_1\leq 1+3\varepsilon.\ee We then find \be \|{\cal P}' - {\cal P}\|_1 &=& \frac{\| {\cal C}- \|{\cal C}\|_1 \cdot {\cal P}\|_1}{\|{\cal C}\|_1}\leq \frac{\| {\cal C}- \|{\cal C}\|_1 \cdot {\cal P}\|_1}{1-3\varepsilon}\nonumber\\ &\leq& \frac{\| {\cal C}- {\cal P}\|_1}{1-3\varepsilon} + \frac{|1- \|{\cal C}\|_1|\cdot \|{\cal P}\|_1}{1-3\varepsilon} \leq \frac{6\varepsilon}{1-3\varepsilon}.\ee
Then, for $\varepsilon \leq 1/6$, we have $\|{\cal P}' - {\cal P}\|_1\leq 12\varepsilon$. Note also that $p_x'\geq \varepsilon/8t$ for all $x\in L$ follows by combining the inequalities $c_x\geq \varepsilon/4t$ and $\|{\cal C}\|\leq 1+ 3\varepsilon$ and $\varepsilon\leq 1/6$.

Finally, we show how to sample ${\cal P}'$. For a bit string $x_1\cdots x_m$ with $m$ between 1 and $k$, let $p'(x_1\cdots x_m)$ denote the marginal probability of ${\cal P}'$ for obtaining $x_1\cdots x_m$ on the first $m$ bits.  Since ${\cal P}$ is $s$-sparse with $s= O(t/\varepsilon)$, each $p'(x_1\cdots x_m)$  can be computed from ${\cal P}'$ in poly$(s) = $ poly$(t, 1/\varepsilon)$ time on input of $x_1\cdots x_m$. By a standard argument, the property that all such marginals can be computed, allows to sample ${\cal P}'$  in poly$(k, t, 1/\varepsilon)$ time \cite{JVV86, valiant02, TD04}.
\end{proof}

\section{Classical simulation of CT states}\label{sect_CT}
Here we review two classical simulation results for CT states which will be used in the proofs of our results.  An $n$-qubit unitary operator $U$ is said to be efficiently computable basis-preserving if there exist efficiently computable functions $f, f':B_n\to B_n$ and $g, g' :B_n\to \mathbbm{C}$ where $|g(x)|=1 =|g'(x)|$ for all $x\in B_n$, such that, for every computational basis state $|x\rangle$, one has \be U|x\rangle = g(x)|f(x)\rangle\quad\mbox{and}\quad U^{\dagger}|x\rangle = g'(x)|f'(x)\rangle\ee A notable example of efficiently computable basis preserving operations is given by operators comprising tensor products of Pauli matrices $\id,X,Y,Z$.
\begin{lemma}[\cite{vdN11}] \label{thm:overlap}
Let $\ket{\psi}$ and $\ket{\varphi}$ be CT $n$-qubit states and let $A$ be an efficiently computable basis-preserving $n$-qubit operation. Then there exists a randomized classical algorithm with runtime poly$(n,1/\varepsilon, \log\frac{1}{\delta})$ which outputs an approximation of $\bra{\psi}A\ket{\varphi}$ with accuracy $\varepsilon$ and success probability at least $1-\delta$.
\end{lemma}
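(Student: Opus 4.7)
The plan is to reduce the overlap $\bra{\psi}A\ket{\varphi}$ to a pure overlap between two CT states and then estimate that overlap via a bounded unbiased estimator sampled by classical means. First I would note that under the basis-preserving assumption on $A$, the state $\ket{\varphi'} := A\ket{\varphi}$ is itself CT. Indeed, sampling from $|\langle y|\varphi'\rangle|^2 = |\langle f^{-1}(y)|\varphi\rangle|^2$ is done by sampling $x \sim |\bracket{x}{\varphi}|^2$ (possible since $\ket{\varphi}$ is CT) and outputting $f(x)$, and an amplitude $\langle y|\varphi'\rangle = \overline{g'(y)}\langle f'(y)|\varphi\rangle$ is computable in poly$(n)$ time using efficient computability of $f',g'$ together with CT of $\ket{\varphi}$. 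Thus it suffices to estimate $\bracket{\psi}{\varphi'}$ where both $\ket{\psi}$ and $\ket{\varphi'}$ are CT.

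Next I would construct an unbiased estimator. Let $p_x := |\bracket{x}{\psi}|^2$, $q_x := |\bracket{x}{\varphi'}|^2$, and consider the mixture distribution $r_x := (p_x+q_x)/2$. This distribution can be sampled in poly$(n)$ time by flipping a fair coin and then sampling from either $\ket{\psi}$ or $\ket{\varphi'}$ using the CT property. Define the random variable
\begin{equation}
F(x) := \begin{cases} \dfrac{\bracket{\psi}{x}\bracket{x}{\varphi'}}{r_x} & \text{if } r_x > 0,\\ 0 & \text{if } r_x = 0. \end{cases}
\end{equation}
Given a sample $x$, the value $F(x)$ is computable in poly$(n)$ time from the amplitudes $\bracket{x}{\psi}$ and $\bracket{x}{\varphi'}$ (both evaluable by the CT property and the reduction above). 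A direct computation gives $\mathbf{E}_{x\sim r}[F(x)] = \sum_x \bracket{\psi}{x}\bracket{x}{\varphi'} = \bracket{\psi}{\varphi'}$, where terms with $r_x=0$ contribute nothing since in that case both $\bracket{\psi}{x}$ and $\bracket{x}{\varphi'}$ vanish.

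The key bound is $|F(x)|\leq 1$. Writing $a=|\bracket{x}{\psi}|$, $b=|\bracket{x}{\varphi'}|$, this follows from AM--GM: $r_x = (a^2+b^2)/2 \geq ab = |\bracket{\psi}{x}\bracket{x}{\varphi'}|$. Hence $F$ is a bounded complex-valued i.i.d.\ random variable and the Chernoff--Hoeffding bound (\Cref{thm:chernoff}) applies: averaging $T = O(\varepsilon^{-2}\log(1/\delta))$ samples produces an $\varepsilon$-approximation of $\bracket{\psi}{\varphi'} = \bra{\psi}A\ket{\varphi}$ with success probability at least $1-\delta$. Each sample requires one draw from $r$ and one evaluation of $F$, both in poly$(n)$ time, giving total runtime poly$(n, 1/\varepsilon, \log(1/\delta))$.

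The only nontrivial step is establishing the boundedness $|F(x)|\leq 1$, which is where the particular choice of the symmetric mixture $r_x$ is crucial; using either $p_x$ or $q_x$ alone as the sampling distribution would leave $F$ potentially unbounded on the support. Once this is secured, the remainder is routine: the basis-preserving hypothesis transfers $A$ onto $\ket{\varphi}$ while preserving CT-ness, and Chernoff--Hoeffding handles the concentration.
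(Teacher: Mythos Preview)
Your argument is correct. The reduction of $A$ into the state (showing $A\ket{\varphi}$ is CT) matches what the paper implicitly relies on, and your Chernoff step is sound. The construction of the bounded estimator, however, differs from the paper's.

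The paper does not use the symmetric mixture $r_x=(p_x+q_x)/2$. Instead it partitions the sum $\bracket{\psi}{\varphi'}=\sum_x \bracket{\psi}{x}\bracket{x}{\varphi'}$ into two pieces via the indicator $\alpha(x):=\mathbbm{1}[p_x\geq q_x]$ and its complement $\beta(x)=1-\alpha(x)$. The first piece is written as an expectation over $\{p_x\}$ of $F(x)=\alpha(x)\,\bracket{\psi}{x}\bracket{x}{\varphi'}/p_x$, and the second as an expectation over $\{q_x\}$ of $G(x)=\beta(x)\,\bracket{\psi}{x}\bracket{x}{\varphi'}/q_x$; boundedness $|F|,|G|\leq 1$ follows because the indicator ensures the denominator is always the larger of $p_x,q_x$. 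Two Chernoff estimates are then combined. Your single-estimator route via AM--GM on the mixture is cleaner: one sampling distribution, one estimator, one Chernoff application, and no case split. Both devices achieve the same goal---forcing the ratio $|\bracket{\psi}{x}\bracket{x}{\varphi'}|/(\text{denominator})\leq 1$---but yours does it symmetrically in one shot, at the modest cost of a factor of two in the normalization that is absorbed into the constants anyway.
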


\begin{lemma}[\cite{vdN11}] \label{thm:partial overlap}
Let $\ket{\psi}$ and $\ket{\varphi}$ be CT $n$-qubit states, let $\ket{\xi}$ and $\ket{\chi}$ be CT $k$-qubit states with $k \leq n$. Then there exists a randomized classical algorithm with runtime poly$(n,1/\varepsilon, \log\frac{1}{\delta})$ which outputs an approximation of $\bra{\varphi}[\ket{\xi}\bra{\chi} \otimes \id]\ket{\psi}$ with accuracy $\varepsilon$ and success probability at least $1-\delta$.
\end{lemma}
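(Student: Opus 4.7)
The plan is to express the target quantity as the expectation of a bounded complex random variable under a distribution that can be sampled efficiently from the CT data, and then invoke the complex Chernoff bound of \Cref{thm:chernoff}. Splitting the qubits into the first $k$ (indices $a,c\in B_k$) and the remaining $n-k$ (index $b\in B_{n-k}$), and writing $\psi_{a,b}:=\langle a,b|\psi\rangle$ and similarly for $\varphi,\xi,\chi$, direct expansion in the computational basis gives
\[
\langle\varphi|(|\xi\rangle\langle\chi|\otimes\id)|\psi\rangle \;=\; \sum_{a,b,c}\overline{\varphi_{c,b}}\,\xi_c\,\overline{\chi_a}\,\psi_{a,b}.
\]
Each of the four amplitudes in the summand is computable in poly$(n)$ time by condition~(2) of \Cref{def:ct}.

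As the importance-sampling distribution I would take the symmetric mixture
\[
q(a,b,c) \;:=\; \tfrac{1}{2}\bigl(|\psi_{a,b}|^2\,|\xi_c|^2 \;+\; |\varphi_{c,b}|^2\,|\chi_a|^2\bigr),
\]
which is a valid probability distribution on $B_k\times B_{n-k}\times B_k$ because each of the two summands is a product of marginals of CT states and hence sums to $1$. Sampling $q$ in poly$(n)$ time is straightforward: flip a fair coin and, on heads, independently draw $(a,b)\sim|\psi|^2$ and $c\sim|\xi|^2$, while on tails draw $(c,b)\sim|\varphi|^2$ and $a\sim|\chi|^2$, invoking condition~(1) of \Cref{def:ct} for each of the four CT states.

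Define the estimator
\[
F(a,b,c) \;:=\; \frac{\overline{\varphi_{c,b}}\,\xi_c\,\overline{\chi_a}\,\psi_{a,b}}{q(a,b,c)},
\]
extended by $F:=0$ wherever $q=0$ (in which case the numerator also vanishes, since $q(a,b,c)=0$ forces at least one factor in each of the two products $|\psi_{a,b}\xi_c|^2$ and $|\varphi_{c,b}\chi_a|^2$ to vanish). The elementary inequality $|\alpha||\beta|\leq \tfrac12(|\alpha|^2+|\beta|^2)$ applied to $\alpha=\psi_{a,b}\xi_c$ and $\beta=\varphi_{c,b}\chi_a$ bounds the absolute value of the numerator by $q(a,b,c)$, so $|F|\leq 1$ pointwise, while by construction $\mathbb{E}_{(a,b,c)\sim q}[F]= \langle\varphi|(|\xi\rangle\langle\chi|\otimes\id)|\psi\rangle$. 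Evaluating $F$ on a given triple requires only computing the four amplitudes and the two probabilities entering $q$, all in poly$(n)$ time.

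Drawing $T=O(\varepsilon^{-2}\log(1/\delta))$ i.i.d.\ samples from $q$, evaluating $F$ at each, and returning the empirical mean then yields an $\varepsilon$-accurate estimate of the target overlap with success probability at least $1-\delta$, by applying \Cref{thm:chernoff} separately to the real and imaginary parts; the total runtime is poly$(n,1/\varepsilon,\log(1/\delta))$. The one nontrivial design choice is the distribution $q$: it must simultaneously be samplable from the four CT states \emph{and} yield a pointwise-bounded importance weight. The four-way product structure of the summand essentially forces an AM--GM-type pairing, and the symmetric mixture above is the simplest choice that pairs the two amplitudes coming from the bras with the two coming from the kets while keeping both branches into which $q$ decomposes trivial to sample.
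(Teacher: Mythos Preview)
Your proof is correct. The importance-sampling construction is clean: the symmetric mixture $q$ is samplable from the CT data, the AM--GM pairing gives $|F|\leq 1$, the expectation recovers the target, and the vanishing-numerator check when $q=0$ is handled properly. (One small redundancy: \Cref{thm:chernoff} in the paper is already stated for complex random variables, so you need not appeal to real and imaginary parts separately.)

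The paper takes a different, more modular route: it observes that the partial overlap $\langle\varphi|(|\xi\rangle\langle\chi|\otimes\id)|\psi\rangle$ can be rewritten as a \emph{full} overlap $\langle\phi|\phi'\rangle$ of two CT states on $n+k$ qubits---essentially $|\phi'\rangle=|\psi\rangle\otimes|\xi\rangle$ and $|\phi\rangle$ obtained from $|\varphi\rangle\otimes|\chi\rangle$ by swapping the two $k$-qubit registers---and then invokes \Cref{thm:overlap} as a black box. If you unfold that reduction and the proof of \Cref{thm:overlap} (which also uses importance sampling, but with a deterministic case split $\alpha(x)=\mathbf{1}[p_x\geq q_x]$ rather than your random mixture), you land on essentially the same estimator as yours. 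Your direct argument is arguably tidier and avoids the auxiliary-register detour; the paper's version buys modularity by making \Cref{thm:partial overlap} an immediate corollary of \Cref{thm:overlap}.
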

The above results are slightly more detailed then the corresponding results in \cite{vdN11} since the latter reference does not provide explicit information about the scaling with $\varepsilon$ and $\delta$. For completeness, proofs of \Cref{thm:overlap} and \Cref{thm:partial overlap} (which are straightforward extensions of the proofs in \cite{vdN11}) are given in \Cref{sec:sharpCT}.

\section{Proofs of main results}\label{sect_proofs}

\subsection{Proof of \Cref{thm_main1}}

The proof will be obtained by showing that the output distribution of any quantum circuit considered in \Cref{thm_main1} satisfies the conditions of \Cref{thm_central}. We introduce some further basic definitions. For any positive integer $d$, let $X_{d}$, $Z_{d}$ be \emph{generalized Pauli operators} (also known as \emph{Weyl operators}) \cite{gottesman99}, which act on the $d$-level computational basis states $\ket{x}$ (with $x\in\mathbbm{Z}_d$) as follows
\begin{align}
X_d \ket{x} &= \ket{x+1}\\
Z_d \ket{x} &=e^{\frac{2 \pi i}{d}x} \ket{x}
\end{align}
where $x+1$ is defined modulo $d$. Note that the order of both $X_d$ is $d$ (i.e. is the smallest integer $r\geq 2$ satisfying $X_d^r = I$ is precisely $d$), as is the order of $Z_d$. Let ${\cal F}_d$ denote the Fourier transform over $\mathbb{Z}_d$. A straightforward application of definitions \cite{gottesman99} shows that
\be\label{Fourier_pauli}
\mathcal{F}_d^{\dagger} Z_d \mathcal{F}_d = X_d.
\quad\mbox{and}\quad\mathcal{F}_d Z_d \mathcal{F}_d^{\dagger} = X_d^{\dagger}.
\ee
\Cref{thm_main1} now follows immediately from \Cref{thm_central} in combination with the following result:
\begin{lemma}\label{thm_firstprop}
Let ${\cal P}$ be a probability distribution on $B_k$ arising from a quantum circuit satisfying conditions (a)-(b) in \Cref{thm_main1}. Let ${\cal P}_m$ denote the marginal distributions arising from measurement of the first $m$ qubits, for $m=1, \dots, k$ (with ${\cal P}\equiv {\cal P}_m$). Then each ${\cal P}_m$ is additively approximable.
\end{lemma}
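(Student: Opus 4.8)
The plan is to show that each marginal ${\cal P}_m$ is additively approximable by reducing its probabilities to expectation values of generalized Pauli shift operators evaluated in the CT state $|\varphi\rangle := U_1|\psi_{\mbox{\scriptsize{in}}}\rangle$ (which is CT by hypothesis (a)), and then invoking \Cref{thm:overlap} together with \Cref{thm_approximate_F}. Writing the state before measurement as ${\cal F}|\varphi\rangle$, where ${\cal F}$ denotes the QFT over $\mathbbm{Z}_{2^k}$ acting on the measured register $S$ (and identity on the remaining $n-k$ qubits), the marginal probability of a given prefix on the first $m$ measured bits is the expectation in $|\varphi\rangle$ of a diagonal projector conjugated by ${\cal F}$. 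The crux of the argument will be to rewrite this conjugated projector as a \emph{short} combination of powers of the shift operator $X_{2^k}$, so that each term falls under \Cref{thm:overlap}.

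Concretely, I would first fix the labelling of the $k$ qubits of $S$ so that the ``first $m$ bits'' of an outcome correspond to the residue $y \bmod 2^m$ of $y\in\mathbbm{Z}_{2^k}$; this is an internal relabelling that affects neither the distribution ${\cal P}$ nor the set of heavy probabilities ultimately extracted, so it is without loss of generality, and it makes every prefix event $\{y\in\mathbbm{Z}_{2^k} : y\equiv v \pmod{2^m}\}$ a coset of the subgroup $2^m\mathbbm{Z}_{2^k}$. The projector $P_v^{(m)}$ onto this coset then admits the clean expansion $P_v^{(m)} = \frac{1}{2^m}\sum_{s=0}^{2^m-1} e^{-2\pi i s v/2^m}\, Z_{2^k}^{\,s\,2^{k-m}}$, which one verifies by acting on basis states $|y\rangle$ and summing a geometric series (a routine check I would not belabour). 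Conjugating by ${\cal F}$ and applying ${\cal F}^\dagger Z_{2^k}{\cal F} = X_{2^k}$ from \cref{Fourier_pauli}, the marginal becomes $p(y\equiv v) = \frac{1}{2^m}\sum_{s=0}^{2^m-1} e^{-2\pi i s v/2^m}\,\langle\varphi|\,(X_{2^k}^{\,s\,2^{k-m}}\otimes I)\,|\varphi\rangle$, i.e. the average over a uniformly random $s\in\{0,\dots,2^m-1\}$ of the quantity $G(s):= e^{-2\pi i s v/2^m} F(s\,2^{k-m})$, where $F(c):=\langle\varphi|(X_{2^k}^c\otimes I)|\varphi\rangle$.

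It then remains to assemble the approximability claim. Each $X_{2^k}^c\otimes I$ is an efficiently computable basis-preserving operation (it sends $|z_S\rangle|z_{\bar S}\rangle \mapsto |z_S + c \bmod 2^k\rangle|z_{\bar S}\rangle$ with unit phase, and its adjoint is the inverse shift), so \Cref{thm:overlap} shows that $F$ is additively approximable and satisfies $|F(c)|\leq 1$; multiplying by the exactly computable phase keeps $G$ additively approximable with $|G(s)|\leq 1$. Since the uniform distribution on $\{0,\dots,2^m-1\}$ is trivially sampleable, \Cref{thm_approximate_F} yields a randomized classical algorithm approximating the marginal to any accuracy with exponentially small failure probability in poly$(n, 1/\varepsilon, \log\frac{1}{\delta})$ time, which is exactly additive approximability of ${\cal P}_m$ in the sense of \Cref{Chernoff_comp}. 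The case $U_2={\cal F}^{-1}$ is identical after replacing \cref{Fourier_pauli} by ${\cal F} Z_{2^k}{\cal F}^\dagger = X_{2^k}^\dagger$. I expect the main obstacle to be precisely the interaction between a \emph{partial} (marginal) measurement and the \emph{global} QFT over $\mathbbm{Z}_{2^k}$: a prefix of measured bits does not in general conjugate under ${\cal F}$ to a single basis-preserving operator, and the argument succeeds only because choosing the residue (least-significant-bit) ordering turns every prefix event into a coset, whose projector expands into few powers of $Z_{2^k}$ and hence, after conjugation, few powers of $X_{2^k}$.
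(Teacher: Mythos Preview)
Your proposal is correct and follows essentially the same route as the paper: you choose the least-significant-bit ordering so that prefix events are cosets of $2^m\mathbbm{Z}_{2^k}$, expand the corresponding projector as $\frac{1}{2^m}\sum_{s} e^{-2\pi i sv/2^m} Z_{2^k}^{\,s2^{k-m}}$, conjugate by ${\cal F}$ using \cref{Fourier_pauli} to obtain powers of $X_{2^k}$, and finish with \Cref{thm:overlap} and \Cref{thm_approximate_F}. The paper packages the phase into $M=\alpha^{\hat y}Z^{2^{k-m}}$ and sums $M^u$, but this is exactly your expansion with $\alpha=e^{-2\pi i/2^m}$; your explicit identification of the ``obstacle'' (that only the LSB ordering makes prefixes into cosets) is precisely the content of the paper's \cref{mod_2_to_the_m}.
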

\begin{proof}
Without loss of generality we let $S$ be the set of first  $k$ qubits. For a $k$-bit string $x=(x_1, \dots, x_k)$, consider the associated $k$-bit integer  $\hat{x}:= x_12^0 + x_2 2 + \cdots + x_k 2^{k-1}$. The standard basis states of a $k$-qubit system will be labeled both by the set of $k$-bit strings $x$ and the associated integers $\hat{x}$ depending on which formulation is most convenient. Below we will use the basic fact that, for any $m=1, \dots, k$,  \be\label{mod_2_to_the_m} \hat{x}\mod 2^m = x_12^0 +  \cdots + x_m 2^{m-1}.\ee
Let $m\in\{1, \dots,  k\}$. For an $m$-bit string $y=y_{1} \cdots y_m$, consider the projector (acting on $k$ qubits) \be  |y_{1}\cdots y_m\rangle\langle y_{1}\cdots y_m|\otimes I\equiv P(y)\ee where $I$ denotes the identity on the last $k-m$ qubits. Thus $P(y)$ is the projector onto those $k$-qubit computational basis states $|x\rangle$ where the first $m$ bits of $x$ coincide with  $y$.  Owing to (\ref{mod_2_to_the_m}), this means that $P(y)$ is the projector on those computational $|x\rangle$ satisfying $\hat{x} $ mod $2^m = \hat{y}$, where $\hat{y}:=y_12^0 +  \cdots + y_m 2^{m-1}$.  Let $Z_{2^k}\equiv Z$ and $X_{2^k}\equiv X$ denote the generalized Pauli operators acting on $\mathbbm{C}^{2^k}$. A straightforward application of the definition of $Z$ shows that  \be \hat{x} \mod 2^m = \hat{y} \quad \mbox{iff} \quad \alpha^{\hat{y}}Z^{2^{k-m}}|\hat{x}\rangle = |\hat{x}\rangle \quad \mbox{ with }\alpha:= e^{-\frac{2\pi i}{2^m}}.\ee This implies that $P(y)$ coincides with the projector onto the eigenspace of $M:=\alpha^{\hat{y}}Z^{2^{k-m}}$ with eigenvalue $1$. This projector can be obtained by averaging over all powers of $M$; since the order of $M$ is $2^m$ (recall that the order of $Z$ is $2^k$), this implies that \be P(y) = \frac{1}{2^m}\sum_{u=0}^{2^m -1} M^u.\ee Let ${\cal F}\equiv {\cal F}_{2^k}$ denote the Fourier transform modulo $2^k$. We consider the scenario where ${\cal F}$ is applied in the block $U_2$; the case where ${\cal F}^{\dagger}$ is applied is treated in full analogy and is omitted here. Denoting $N:=\alpha^{\hat{y}}X^{2^{k-m}}$ (i.e. we replace $Z$ by $X\equiv X_{2^k}$ in the definition of $M$) and recalling the first identity of \cref{Fourier_pauli} we find \be\label{sum_N} {\cal F}^{\dagger}P(y){\cal F} = \frac{1}{2^m}\sum_{u=0}^{2^m -1} N^u.\ee Now denote the $n$-qubit CT state generated after application of the block $U_1$ by $|\mbox{CT}\rangle$. Furthermore denote the marginal probability of obtaining the bit string $y$ when measuring the first $m$ qubits at the end of the circuit by $p(y)$. Then \be p(y) = \langle \mbox{CT}| [{\cal F}^{\dagger}P(y) {\cal F}]\otimes I|\mbox{CT}\rangle\ee where $I$ denotes the identity acting on the last $n-k$ qubits. Using \Cref{sum_N} we find \be p(y_1\cdots y_m) = \frac{1}{2^m}\sum_{u=0}^{2^m -1} \langle \mbox{CT}|N^u\otimes I|\mbox{CT}\rangle.\ee It easily follows from the definition of $N$ that each $N^{u}\otimes I$ is efficiently computable basis-preserving (as defined in section \ref{sect_CT}). Together with \Cref{thm:overlap} this implies that the function $u\in\mathbbm{Z}_{2^m}\to \langle\mbox{CT}|N^u\otimes I|\mbox{CT}\rangle$ is additively approximable. But then \Cref{thm_approximate_F} implies that $y\to p(y)$ is additively approximable as well.
\end{proof}

\subsection{Proof of \Cref{thm_main2}}

Similar to the proof of \Cref{thm_main1}, also the proof of \Cref{thm_main2} follows immediately by showing that the output distribution of any quantum circuit considered in \Cref{thm_main2} satisfies the conditions of \Cref{thm_central}. The latter is done next.
\begin{lemma}\label{thm_secondprop}
Let ${\cal P}$ be a probability distribution on $B_k$ arising from a quantum circuit satisfying conditions (a)-(b') in \Cref{thm_main2}. Let ${\cal P}_m$ denote the marginal distributions arising from measurement of the first $m$ qubits, for $m=1, \dots, k$ (with ${\cal P}\equiv {\cal P}_m$). Then each ${\cal P}_m$ is additively approximable.
\end{lemma}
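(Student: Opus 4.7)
My plan is to follow the same strategy as the proof of Lemma~\ref{thm_firstprop}, but replace the QFT-specific decomposition of the projector (into a sum of powers of a generalized Pauli) by the much simpler decomposition that comes from the tensor product structure of $U_2$. The key difference is that because $U_2 = u_1\otimes\cdots\otimes u_n$ is already a product operator, conjugating a product projector by $U_2$ yields another product operator---no Fourier-type expansion over a cyclic group is needed---and we can invoke \Cref{thm:partial overlap} directly.

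Concretely, without loss of generality let $S$ consist of the first $k$ qubits and let the first $m\leq k$ qubits be the ones whose marginal we are estimating. Write $|\mbox{CT}\rangle := U_1|\psi_{\mbox{\scriptsize{in}}}\rangle$, which is CT by condition (a). For an $m$-bit string $y=y_1\cdots y_m$, the marginal probability is
\be
p(y) = \opbracket{\mbox{CT}}{U_2^\dagger\bigl(\ket{y}\bra{y}\otimes I\bigr)U_2}{\mbox{CT}},
\ee
where $I$ acts on the remaining $n-m$ qubits. Using that $U_2=u_1\otimes\cdots\otimes u_n$ and that $\ket{y}\bra{y}=\bigotimes_{j=1}^m \ket{y_j}\bra{y_j}$, we obtain
\be
U_2^\dagger(\ket{y}\bra{y}\otimes I)U_2 \;=\; \Bigl(\bigotimes_{j=1}^m u_j^\dagger\ket{y_j}\bra{y_j}u_j\Bigr)\otimes I \;=\; \ket{\xi_y}\bra{\xi_y}\otimes I,
\ee
where $\ket{\xi_y}:=\bigotimes_{j=1}^m u_j^\dagger\ket{y_j}$ is a product state on $m$ qubits (or qudits of constant dimension in the second half of \Cref{thm_main2}). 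Since each factor lives in a Hilbert space of dimension $O(1)$, the state $\ket{\xi_y}$ is trivially CT: sampling and amplitude evaluation each decompose into $m$ independent constant-size computations and can be done in poly$(m)$ time from the input $y$.

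Thus $p(y) = \opbracket{\mbox{CT}}{\bigl(\ket{\xi_y}\bra{\xi_y}\otimes I\bigr)}{\mbox{CT}}$ is exactly of the form covered by \Cref{thm:partial overlap}, with both $n$-qubit CT states equal to $\ket{\mbox{CT}}$ and both $m$-qubit CT states equal to $\ket{\xi_y}$. That lemma therefore yields a randomized classical algorithm running in time poly$(n,1/\varepsilon,\log(1/\delta))$ that outputs an $\varepsilon$-approximation of $p(y)$ with success probability at least $1-\delta$. This is precisely the statement that the map $y\mapsto p(y)$ is additively approximable in the sense of \Cref{Chernoff_comp}, and hence each marginal ${\cal P}_m$ (and ${\cal P}={\cal P}_k$ itself) is additively approximable.

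The qudit extension in the second sentence of (b$'$) is handled verbatim: the only place where the qubit assumption entered was in declaring $\ket{\xi_y}$ to be CT, and this remains true when each local factor has dimension $d_j=O(1)$. There is no real obstacle here---the proof is a direct consequence of \Cref{thm:partial overlap} combined with the observation that conjugating a local computational-basis projector by a product unitary produces a product projector. \Cref{thm_main2} then follows from this lemma together with \Cref{thm_central}, exactly as in the proof of \Cref{thm_main1}.
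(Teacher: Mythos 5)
Your proof is correct and follows essentially the same route as the paper's own argument: conjugating the local projector $\ket{y}\bra{y}\otimes\id$ by the product unitary $U_2$ yields a product-state projector $\ket{\xi_y}\bra{\xi_y}\otimes\id$, product states are CT, and \Cref{thm:partial overlap} then gives additive approximability of $y\mapsto p(y)$, with the qudit case handled identically. No gaps.
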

\begin{proof}
We prove the result for qubit systems; the proof will carry over straightforwardly to  systems of qudits of potentially different dimensions. Without loss of generality we let $S$ be the set of first  $k$ qubits. For an $m$-bit string $y=y_{1} \cdots y_m$ with $m\leq k$,  let $p(y)$ denote the marginal probability of the outcome $y_1\cdots y_m$ when measuring the first $m$ qubits at the end of the circuit. We need to show that the function $y\to p(y)$ is additively approximable. Denote the CT state generated after application of the block $U_1$ by $|\mbox{CT}\rangle$. Since $U_2=u_1\otimes \cdots \otimes u_n$ is a tensor product operator and since $|y\rangle$ is a product state, we have  \be p(y)= \bra{\mbox{CT}} U^{\dagger}[\ket{y}\bra{y}  \otimes \id] U \ket{{\mbox{CT}}} = \bra{\mbox{CT}}\ket{\alpha}\bra{\alpha}  \otimes \id \ket{{\mbox{CT}}}\ee for some $m$-qubit tensor product state $|\alpha\rangle$ (with efficiently computable description). Since product states are CT, \Cref{thm:partial overlap} immediately implies that $y\to p(y)$ is additively approximable.
\end{proof}
\subsection{Proof of \Cref{thm_main3} and \Cref{thm_main4}}

\begin{lemma}\label{thm_CT_chernoff}
Let $|\mbox{CT}\rangle$ be an $n$-qubit CT state, let $U=U_1\otimes \cdots\otimes U_n$ be a unitary tensor product operator and let ${\cal F}$ denote the Fourier transform modulo $2^n$. Then the following functions are additively approximable (where $x=x_1\cdots x_n$ is an $n$-bit string):\be x&\to& \langle x|{\cal F}|\mbox{CT}\rangle\\x&\to& \langle x|{\cal F}^{\dagger}|\mbox{CT}\rangle\\ x&\to& \langle x|U|\mbox{CT}\rangle.\ee
The last function is still additively approximable when generalized to tensor product operators acting on $n$ qudit systems with potentially different dimensions.
\end{lemma}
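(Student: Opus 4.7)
The plan is to express each of the three quantities $\langle x|\mathcal{F}|\mathrm{CT}\rangle$, $\langle x|\mathcal{F}^{\dagger}|\mathrm{CT}\rangle$ and $\langle x|U|\mathrm{CT}\rangle$ as an overlap $\langle \phi_x|\mathrm{CT}\rangle$ of two CT states and then invoke \Cref{thm:overlap} (with the trivial efficiently computable basis-preserving operator $A=\id$) to conclude additive approximability. Since \Cref{thm:overlap} runs in time $\mathrm{poly}(n,1/\varepsilon,\log(1/\delta))$, this matches exactly the requirement of \Cref{Chernoff_comp}. The only thing that needs to be checked, then, is that in each of the three cases the ``other'' state $|\phi_x\rangle$ is CT, with a description that can be produced efficiently from the bit string $x$.

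For the tensor product case, writing $U^{\dagger}|x\rangle = U_1^{\dagger}|x_1\rangle\otimes\cdots\otimes U_n^{\dagger}|x_n\rangle$ displays $|\phi_x\rangle:=U^{\dagger}|x\rangle$ as a product state. Product states are CT (one can sample each factor independently and multiply the amplitudes), and the single-qudit factors $U_i^{\dagger}|x_i\rangle$ are described by $O(1)$ complex numbers each, so $|\phi_x\rangle$ is an efficiently computable CT state. The identity $\langle x|U|\mathrm{CT}\rangle = \langle\phi_x|\mathrm{CT}\rangle$ then yields additive approximability by \Cref{thm:overlap}. The same reasoning goes through without change for qudits of differing (constant) dimensions.

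For the two Fourier cases, the key observation is the standard product decomposition of $\mathcal{F}^{\dagger}_{2^n}$ applied to a computational basis state. Expanding the integer $y=\sum_{j=1}^n y_j 2^{j-1}$ in the defining formula for $\mathcal{F}_{2^n}$, the phase factorizes and one obtains
\begin{equation}
\mathcal{F}^{\dagger}_{2^n}|x\rangle \;=\; \bigotimes_{j=1}^n \tfrac{1}{\sqrt{2}}\bigl(|0\rangle + e^{-2\pi i x/2^{n-j+1}}|1\rangle\bigr),
\end{equation}
so $\mathcal{F}^{\dagger}_{2^n}|x\rangle$ is a product state, and an analogous identity (with the opposite sign in the phase) shows the same for $\mathcal{F}_{2^n}|x\rangle$. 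Setting $|\phi_x\rangle:=\mathcal{F}^{\dagger}_{2^n}|x\rangle$ (resp.\ $\mathcal{F}_{2^n}|x\rangle$), both states are CT and efficiently describable from $x$, and the identities $\langle x|\mathcal{F}|\mathrm{CT}\rangle = \langle\phi_x|\mathrm{CT}\rangle$ (resp.\ $\langle x|\mathcal{F}^{\dagger}|\mathrm{CT}\rangle = \langle\phi_x|\mathrm{CT}\rangle$) again reduce the problem to \Cref{thm:overlap}.

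The only place where anything nontrivial happens is the product decomposition of the Fourier basis states, which however is the same identity that underlies the efficient circuit for $\mathcal{F}_{2^n}$; every other step is bookkeeping. No obstacle is anticipated.
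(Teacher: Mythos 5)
Your proposal is correct and follows essentially the same route as the paper: the paper's proof simply notes that ${\cal F}|x\rangle$, ${\cal F}^{\dagger}|x\rangle$ and $U|x\rangle$ are CT and invokes \Cref{thm:overlap} with $A=\id$, which is exactly your reduction (you supply the product-state decomposition of the Fourier basis states as the explicit reason for CT-ness, modulo the usual bit-reversal ordering convention, which does not affect the argument since one could equally note that the amplitudes $2^{-n/2}e^{\pm 2\pi i xy/2^n}$ are directly computable and the output distribution is uniform).
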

\begin{proof}
A straightforward application of definitions shows that the states ${\cal F}|x\rangle$, ${\cal F}^{\dagger}|x\rangle$ and $U|x\rangle$ are CT. The result then immediately follows from \Cref{thm:overlap} (with $A$ being the identity).
\end{proof}
\begin{lemma}\label{thm_phase_estimate}
Let $c, c'$ be two complex numbers satisfying $c\neq 0$ and $|c- c'|\leq \alpha$ for some $\alpha>0$. Let $c= \theta |c|$ where $\theta$ is the phase of $c$ and similarly $c' = \theta' |c'|$. Then $|\theta - \theta'|\leq 2\alpha/|c|$.
\end{lemma}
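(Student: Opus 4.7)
The plan is to directly bound $|\theta-\theta'|=\left|\tfrac{c}{|c|}-\tfrac{c'}{|c'|}\right|$ by splitting the difference of two unit-modulus numbers into a ``numerator'' part and a ``denominator'' part, then applying the triangle inequality and the reverse triangle inequality.

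Concretely, I would first write
\[
\theta-\theta' \;=\; \frac{c}{|c|}-\frac{c'}{|c'|} \;=\; \frac{c-c'}{|c|} \;+\; c'\!\left(\frac{1}{|c|}-\frac{1}{|c'|}\right),
\]
which is the key algebraic identity that isolates the two sources of error: the numerator error $c-c'$ and the modulus error $|c|-|c'|$. Taking absolute values and using the triangle inequality gives
\[
|\theta-\theta'| \;\leq\; \frac{|c-c'|}{|c|} \;+\; \frac{\bigl||c'|-|c|\bigr|}{|c|}.
\]
The first summand is at most $\alpha/|c|$ by hypothesis. For the second summand, I would invoke the reverse triangle inequality $\bigl||c'|-|c|\bigr|\leq |c-c'|\leq \alpha$, giving another contribution of at most $\alpha/|c|$. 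Adding the two yields $|\theta-\theta'|\leq 2\alpha/|c|$, as required.

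There is really no hard step here; the only thing to be a little careful about is the degenerate possibility $c'=0$, in which case $\theta'$ is not defined in the statement. If one wants to be thorough, one can either restrict to $c'\neq 0$ (and observe that if $|c-c'|<|c|$ this is automatic) or interpret $\theta'$ as an arbitrary unit-modulus number when $c'=0$, in which case the estimate $|\theta-\theta'|\leq 2$ is trivially compatible with $2\alpha/|c|$ whenever $\alpha\geq|c|$. Either way, the identity above is the whole content of the proof.
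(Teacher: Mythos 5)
Your proof is correct and is essentially the paper's argument in disguise: multiplying your identity through by $|c|$ gives exactly the paper's decomposition $c-\theta'|c| = (c-c') + (c'-\theta'|c|)$, with the same two ingredients (triangle inequality plus the reverse triangle inequality $\bigl||c'|-|c|\bigr|\leq |c-c'|$). Your remark about the degenerate case $c'=0$ is a small additional point of care that the paper glosses over, but it does not change the substance.
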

\begin{proof}
Since $|c- c'|\leq \alpha$, we have $||c|- |c'||\leq \alpha$. Then \be |\theta - \theta'| |c| = |c - \theta'|c||\leq |c-c'|+ |c' - \theta'|c|| = |c-c'|+ ||c'| - |c||\leq 2\alpha.\ee
\end{proof}
Next we prove \Cref{thm_main3} and \Cref{thm_main4}.  Let $|\psi_{\mbox{\scriptsize{out}}}\rangle$  denote the final state in any of the settings considered in \Cref{thm_main3} and \Cref{thm_main4}. We write $\langle x|\psi_{\mbox{\scriptsize{out}}}\rangle = \gamma_x\sqrt{p_x}$ where $\gamma_x$ is the phase and $p_x$ the modulus squared, so that ${\cal P}=\{p_x\}$ is the probability distribution arising from measuring all qubits of $|\psi_{\mbox{\scriptsize{out}}}\rangle$ in the computational basis.
Since $|\psi_{\mbox{\scriptsize{out}}}\rangle$ is $\sqrt{\varepsilon}$-approximately $t$-sparse, ${\cal P}$ is $\varepsilon$-approximately $t$-sparse by \Cref{thm_sparse_equiv}.  Recalling \Cref{thm_firstprop} and \Cref{thm_secondprop}, we find that all conditions of \Cref{thm_central} are fulfilled. Thus there exists a randomized classical algorithm with runtime poly$(n, t, \frac{1}{\varepsilon}, \log \frac{1}{\delta})$ which outputs  an $s$-sparse probability distribution ${\cal P}'=\{p_x'\}$ where $s=O(t/\varepsilon)$  such that, with probability at least $1-\delta$,  $\|{\cal P}'- {\cal P}\|_1\leq 12\varepsilon$. Let $L$ be the list of bit strings as in the proof of \Cref{thm_central}. Recall from the latter proof also the following properties: $|L|\leq 2t/\varepsilon$; $L$ is precisely the support of ${\cal P}'$; $p_x\geq \varepsilon/2t$ for every $x\in L$.

Thus far we have  computed an approximation ${\cal P}'$ of the probability distribution ${\cal P}$. Next we will also approximately compute the amplitudes of $ |\psi_{\mbox{\scriptsize{out}}}\rangle$ by employing \Cref{thm_CT_chernoff}.  For every $x\in L$ we compute a complex number $a_x$ satisfying
\be
|a_x - \langle x|\psi_{\mbox{\scriptsize{out}}}\rangle|\leq \sqrt{\varepsilon^3/8t}.
\ee
Owing to \Cref{thm_CT_chernoff},  the function $x\to \langle x|\psi_{\mbox{\scriptsize{out}}}\rangle$ is additively approximable. Therefore each individual $a_x$ can be computed with success probability at least $1- \delta/|L|$ in time $N=$ poly$(n, t, \frac{1}{\varepsilon}, \log \frac{1}{\delta})$. Thus the total runtime for computing all $a_x$ is $|L|T = $ poly$(n, t, \frac{1}{\varepsilon}, \log \frac{1}{\delta})$ and the total success probability is at least $1-\delta$. We then compute the complex phase $\theta_x$ of each $a_x$ (which requires $O(|L|)$ computational steps in total) and define the  state \be |\varphi\rangle:= \sum_{x\in L} \theta_x \sqrt{p_x'}|x\rangle. \ee Note that $|\varphi\rangle$ has 2-norm equal to 1: indeed $\||\varphi\rangle\|_2^2$ coincides with $\sum_{x\in L } p_x'$ which equals 1 since $L$ coincides with the support of ${\cal P}'$. Next we prove that $|\varphi\rangle$ is $O(\sqrt{\varepsilon})$-close to $|\psi_{\mbox{\scriptsize{out}}}\rangle$. The idea of the argument is rather straightforward but the details will be somewhat tedious.

First we show that the phase $\theta_x$ is close to $\gamma_x$ for every $x\in L$ (recall that the latter is the phase of $\langle x|\psi_{\mbox{\scriptsize{out}}}\rangle$): using \Cref{thm_phase_estimate} and recalling that $p_x\geq \varepsilon/2t$, we have \be|\theta_x - \gamma_x|\leq 2\cdot \sqrt{\frac{ \varepsilon^3}{8t}} \cdot\frac{1}{\sqrt{p_x}} \leq \varepsilon. \ee
This implies that \be
\| \sum_{x\in L} (\theta_x - \gamma_x) \sqrt{p_x'}|x\rangle \|_2^2= \sum_{x\in L} |\theta_x - \gamma_x|_2^2 p_x'\leq \varepsilon^2 \sum_{x\in L}p_x' \leq \varepsilon^2.
\ee
For every two numbers $a, b\geq 0$ we have $|a-b|^2\leq |a^2-b^2|$. This implies that \be \sum |\sqrt{p_x'} - \sqrt{p_x}|^2 \leq \sum |p_x' - p_x| = \|{\cal P}'- {\cal P}\|_1\leq 12\varepsilon\ee where the sums are over all $x\in B_n$. Hence \be \||\psi_{\mbox{\scriptsize{out}}}\rangle - \sum_{x\in L} \gamma_x \sqrt{p_x'}|x\rangle\|_2^2 &=& \sum_{x\in L} |\gamma_x\sqrt{p_x}- \gamma_x\sqrt{p_x'}|^2 + \sum_{x\notin L} p_x\nonumber\\ &=&\sum_{x\in L} |\sqrt{p_x}- \sqrt{p_x'}|^2 + \sum_{x\notin L} p_x\nonumber\\ &=& \sum_{x\in B_n} |\sqrt{p_x} - \sqrt{p_x}'|^2\leq 12\varepsilon\ee where in the last equality we used that $p_x'=0$ for all $x\notin L$.
Writing \be |\varphi\rangle = \sum_{x\in L} \gamma_x \sqrt{p_x'}|x\rangle +  \sum_{x\in L} (\theta_x - \gamma_x) \sqrt{p_x'}|x\rangle \ee and using the triangle inequality, we then find \be \||\psi_{\mbox{\scriptsize{out}}}\rangle - |\varphi\rangle \|_2 &\leq& \||\psi_{\mbox{\scriptsize{out}}}\rangle - \sum_{x\in L} \gamma_x \sqrt{p_x'}|x\rangle\|_2 + \| \sum_{x\in L} (\theta_x - \gamma_x) \sqrt{p_x'}|x\rangle \|_2\nonumber\\ &\leq& \sqrt{12\varepsilon} + \varepsilon \leq 5\sqrt{\varepsilon}.\ee

\subsection{Proof of \Cref{thm_main5}}

Denote by ${\cal P}=\{p_x: x\in B_n\}$ the probability distribution arising from a standard basis measurement of all $n$ qubits performed on the state ${\cal F}_{2^n}^{\dagger}|\psi\rangle$. Then  $p_x=|\hat\psi_x|^2$. It follows from \Cref{thm_firstprop} that ${\cal P}$ and its marginals ${\cal P}_m$ fulfill all conditions of \Cref{thm_compute_large_coeff}. The latter result then immediately implies the existence of a classical algorithm with runtime poly$(k, \frac{1}{\theta}, \log\frac{1}{\pi})$ which outputs a list $L =\{x^1, \dots, x^l\}$ where $l\leq 2/\theta$ such that, with probability at least $1-\pi$, conditions (a) and (b) in \Cref{thm_main5} are fulfilled.  Furthermore, \Cref{thm_CT_chernoff} implies that, given any $x\in B_n$, there exists a classical algorithm with runtime poly$(n, 1/\varepsilon, \log{\frac{1}{\delta}})$ which, with probability at least $1-\delta$, outputs an $\varepsilon$-approximation of $\hat\psi_x$, since $\hat\psi_x = \langle x|{\cal F}_{2^n}^{\dagger}|\psi\rangle$.

Fully analogously, for $U=U_1\otimes \cdots\otimes U_n$ let ${\cal P}=\{p_x\}$ be the probability distribution arising from a standard basis measurement of all $n$ qubits performed on the state $U^{\dagger}|\psi\rangle$. The extension of \Cref{thm_main5} to the product basis $\{U|x\rangle\}$  is now obtained by combining \Cref{thm_secondprop}, \Cref{thm_compute_large_coeff}, and  \Cref{thm_CT_chernoff}.

\section{Further research}

In the classical simulation algorithms given in this paper, we have not optimized the degree or constants involved in the polynomial-time simulation. While our algorithm is a generalization of \cite{KM91,GL89}, for optimal performance one could try to adapt the more advanced, query-optimal algorithm of \cite{HIKP12b} to our setting.

\bibliography{quantum}
\bibliographystyle{alpha}

\appendix

\section{Proof of lemma \ref{thm:chernoff}}\label{sec:app_chernoff}

We recall the standard Chernoff-Hoeffding bound for real-valued random variables.
\begin{theorem}[Chernoff-Hoeffding bound] \label{thm:hoeffding}
Let $X_1,\dots, X_T$ be i.i.d. real random variables. Assume that $|X_i|\leq 1$ and denote $E:= \mathbf{E}X_i$. Then \be\label{hoeff} \mbox{Prob} \left\{ \left|\frac{1}{T}\sum_{i=1}^T X_i - E\right| \leq \varepsilon\right\} \geq 1-2 e^{- \frac{T\varepsilon^2}{2}}.\ee
\end{theorem}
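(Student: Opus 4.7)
The plan is to give the standard Cramér--Chernoff proof via the exponential moment method, combined with Hoeffding's lemma to control the moment generating function of a bounded centered random variable.

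First I would reduce to a one-sided bound. By symmetry, it suffices to prove
\[
\Pr\!\left\{\frac{1}{T}\sum_{i=1}^T X_i - E > \varepsilon\right\} \leq e^{-T\varepsilon^2/2},
\]
since applying the same inequality to the random variables $-X_i$ (which are again i.i.d., bounded by $1$, and have mean $-E$) controls the lower tail, and a union bound introduces the factor of $2$ appearing in \eqref{hoeff}. Writing $S_T := \sum_{i=1}^T (X_i - E)$, the Markov inequality applied to $e^{\lambda S_T}$ (for any $\lambda > 0$) gives
\[
\Pr\{S_T > T\varepsilon\} \leq e^{-\lambda T \varepsilon}\,\mathbf{E}\!\left[e^{\lambda S_T}\right] = e^{-\lambda T \varepsilon}\,\mathbf{E}\!\left[e^{\lambda(X_1 - E)}\right]^T,
\]
using the i.i.d.\ assumption in the last equality.

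The main technical step is to bound the moment generating function $\mathbf{E}[e^{\lambda(X_1 - E)}]$. This is Hoeffding's lemma: for a centered random variable $Y$ with $Y \in [a,b]$, one has $\mathbf{E}[e^{\lambda Y}] \leq e^{\lambda^2 (b-a)^2/8}$. I would prove it by convexity --- write $e^{\lambda y} \leq \tfrac{b-y}{b-a}\,e^{\lambda a} + \tfrac{y-a}{b-a}\,e^{\lambda b}$ on $[a,b]$, take expectations, and then check that the resulting function of $\lambda$ is bounded above by $e^{\lambda^2 (b-a)^2/8}$ by examining its Taylor expansion around $0$ (the derivative argument is the familiar one and can be found in any standard reference). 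Applied to $Y = X_1 - E \in [-1-E,\,1-E]$, whose range has length at most $2$, this yields $\mathbf{E}[e^{\lambda(X_1-E)}] \leq e^{\lambda^2/2}$.

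Plugging this into the Markov bound,
\[
\Pr\{S_T > T\varepsilon\} \leq e^{-\lambda T\varepsilon + T\lambda^2/2},
\]
and optimizing over $\lambda > 0$ by setting $\lambda = \varepsilon$ gives the desired bound $e^{-T\varepsilon^2/2}$. Combining this with the analogous lower-tail bound via a union bound yields the claim in \eqref{hoeff}. The only genuinely nontrivial ingredient is Hoeffding's lemma; the remainder is bookkeeping around the exponential Markov inequality and optimization of the free parameter $\lambda$.
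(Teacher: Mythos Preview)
Your proof is correct and is the standard Cram\'er--Chernoff argument via Hoeffding's lemma. Note, however, that the paper does not actually give a proof of this statement: it simply \emph{recalls} the real-valued Chernoff--Hoeffding bound as a known result and then uses it as a black box to derive the complex-valued version (Lemma~\ref{thm:chernoff}) by bounding real and imaginary parts separately. So your proposal is not so much an alternative to the paper's proof as a full proof where the paper provides none; what you have written is exactly the textbook argument one would expect.
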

\noindent The proof of the complex-valued version of the Chernoff-Hoeffding bound as given in lemma \ref{thm:chernoff} is an immediate corollary of the real-valued version, as follows. For complex-valued random variables $X_1, \dots, X_T$ we apply \Cref{thm:hoeffding} independently to the real and imaginary parts of the $X_i$, where we choose $\tilde{\varepsilon}=\frac{\varepsilon}{\sqrt{2}}$. Denoting $Y:=\frac{1}{T}\sum_{i=1}^{T} X_i - E$, this yields lower bounds for the probabilities that $Re(Y) \leq \tilde{\varepsilon}$ and $Im(Y) \leq \tilde{\varepsilon}$.  Putting things together we find \be \mbox{Prob} \left\{ \left|\frac{1}{T}\sum_{i=1}^T X_i - E\right| \leq \varepsilon\right\} \geq 1-4 e^{- \frac{T\varepsilon^2}{4}}.\ee

\section{Proofs of lemmas \ref{thm:overlap} and \ref{thm:partial overlap}} \label{sec:sharpCT}

In this section we give explicit quantitative versions of the definition and theorems about CT states, which were only stated implicitly in \cite{vdN10}.

\begin{definition}[Computationally Tractable (CT) states] \label{def:sct}
An $n$-qubit state $\ket{\psi}$ is called `computationally tractable' (CT) if the following conditions hold:
\begin{enumerate}
\item \ [Sample] it is possible to sample in time $s_{\ket{\psi}}=O(poly(n))$ with classical means from the probability distribution $Prob(x) = |\bracket{x}{\psi}|^2$ on the set of $n$-bit strings $x$. \label{def:sct:sample}
\item \ [Query] upon input of any bit string $x$, the coefficient $\bracket{x}{\psi}$ can be computed in $c_{\ket{\psi}}=O(poly(n))$ time on a classical computer. \label{def:sct:query}
\end{enumerate}
\end{definition}
The proof of lemma \ref{thm:overlap} will follow immediately from the following result:
\begin{lemma} Let $\ket{\psi}$ and $\ket{\varphi}$ be two CT $n$-qubit states and let $s=s_{\ket{\psi}}+s_{\ket{\varphi}}$, $c=c_{\ket{\psi}}+c_{\ket{\varphi}}$. Then there exists a randomized classical algorithm to compute $\mu$ such that $|\bracket{\varphi}{\psi} - \mu| \leq \varepsilon$
in time $O(\frac{s+c}{\varepsilon^2} \log(\frac{4}{\delta}))$ with error probability $\delta$.
\end{lemma}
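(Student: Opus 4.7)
The plan is to use the classical Monte Carlo estimator from \cite{vdN11}: rewrite $\langle\varphi|\psi\rangle$ as the expectation of a bounded complex-valued random variable under a distribution I can sample efficiently, and then invoke the complex Chernoff--Hoeffding bound (\Cref{thm:chernoff}).

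\textbf{Setup of the estimator.} Expand $\langle\varphi|\psi\rangle = \sum_x \langle\varphi|x\rangle\langle x|\psi\rangle$ and introduce the symmetric mixture distribution
\[
q_x \;:=\; \tfrac{1}{2}\bigl(|\langle x|\psi\rangle|^2 + |\langle x|\varphi\rangle|^2\bigr).
\]
A sample from $q$ is obtained by flipping a fair coin and sampling from the measurement distribution of $|\psi\rangle$ or $|\varphi\rangle$ via the CT sampling condition, at cost $O(s)$ per sample. For $x$ with $q_x > 0$ define the random variable
\[
F(x) \;:=\; \frac{\langle\varphi|x\rangle\langle x|\psi\rangle}{q_x},
\]
and set $F(x):=0$ elsewhere. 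Each evaluation of $F(x)$ costs $O(c)$ via the CT query condition, since I need only the two amplitudes $\langle x|\psi\rangle$ and $\langle x|\varphi\rangle$.

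\textbf{The two required properties.} A direct calculation gives
\[
\mathbf{E}_{x\sim q}\,F(x) \;=\; \sum_x q_x\,F(x) \;=\; \sum_x \langle\varphi|x\rangle\langle x|\psi\rangle \;=\; \langle\varphi|\psi\rangle.
\]
For the boundedness, AM--GM with $a=|\langle x|\psi\rangle|$ and $b=|\langle x|\varphi\rangle|$ yields $q_x \geq |\langle x|\psi\rangle|\,|\langle x|\varphi\rangle|$, and therefore
\[
|F(x)| \;=\; \frac{|\langle x|\psi\rangle|\,|\langle x|\varphi\rangle|}{q_x} \;\leq\; 1.
\]

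\textbf{Conclude via Chernoff.} Draw $T = \lceil 4\varepsilon^{-2}\log(4/\delta)\rceil$ i.i.d.\ samples $x^1,\dots,x^T$ from $q$ and output $\mu := T^{-1}\sum_{i=1}^T F(x^i)$. By \Cref{thm:chernoff}, $|\mu - \langle\varphi|\psi\rangle|\leq \varepsilon$ holds with probability at least $1-\delta$. The total runtime is $T\cdot O(s+c) = O\!\bigl(\tfrac{s+c}{\varepsilon^2}\log\tfrac{4}{\delta}\bigr)$, as claimed. There is no real obstacle here; the only step that requires any thought is the uniform bound $|F(x)|\leq 1$, and this is precisely what forces the use of the symmetric mixture $q$ (sampling from $|\langle x|\psi\rangle|^2$ alone would leave the estimator unbounded whenever $|\varphi\rangle$ concentrates on a different region than $|\psi\rangle$, and no Chernoff-type bound would then apply).
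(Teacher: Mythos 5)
Your proof is correct, and it takes a slightly different route from the paper's. The paper also builds a bounded importance-sampling estimator and applies the complex Chernoff--Hoeffding bound, but it controls the ratio differently: it partitions the bit strings according to whether $p_x=|\bracket{x}{\psi}|^2$ or $q_x=|\bracket{x}{\varphi}|^2$ is larger (via indicator functions $\alpha,\beta$), writes $\bracket{\varphi}{\psi}=\langle F\rangle+\langle G\rangle$ with $F$ supported where $p_x\geq q_x$ and normalized by $p_x$, and $G$ supported on the complement and normalized by $q_x$, and then runs two separate estimations (one sampling from $\{p_x\}$, one from $\{q_x\}$), each to accuracy $\varepsilon/2$ with failure probability $\delta/2$, combined by a union bound and the triangle inequality. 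You instead sample once from the symmetric mixture $\tfrac12(p_x+q_x)$ (realizable by a fair coin plus the two CT sampling oracles) and bound the single estimator by AM--GM, so one Chernoff application suffices with no splitting of $\varepsilon$ and $\delta$; this is a bit cleaner and even gives marginally better constants ($T=\lceil 4\varepsilon^{-2}\log(4/\delta)\rceil$ total samples versus the paper's $\tfrac{16}{\varepsilon^2}\log(\tfrac{8}{\delta})$ per estimator). Both arguments use exactly the same CT resources --- each sample costs $O(s)$ and each evaluation needs both amplitudes at the sampled string, costing $O(c)$ (in the paper to evaluate $\alpha$, in yours to evaluate the mixture weight) --- so the asymptotic runtime $O\bigl(\tfrac{s+c}{\varepsilon^2}\log\tfrac{4}{\delta}\bigr)$ is identical.
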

\begin{proof}
Denote $p_x := |\bracket{x}{\psi}|^2$ and $q_x := |\bracket{x}{\varphi}|^2$ . Since $\ket{\psi}$ and $\ket{\varphi}$ are CT states, it is possible to sample from the probability distributions $\{p_x\}$ and $\{q_x\}$ in time $s$ (\Cref{def:sct}, \Cref{def:sct:sample}). Define the function
$\alpha: \{0,1\}^n \mapsto \{0,1\}$ by $\alpha(x) = 1$ if $p_x \geq q_x$ and $\alpha(x)=0$ otherwise, for every $n$-bit string $x$, and define the function $\beta$ by $\beta(x):=1-\alpha(x)$. Then $\alpha$ and $\beta$ can be computed in time $O(c)$ since $p_x$ and $q_x$ can be computed in time $c$ each by \Cref{def:sct:query} in \Cref{def:sct}. The overlap $\bracket{\varphi}{\psi}$ is equal to
\begin{equation}
\bracket{\varphi}{\psi} = \sum \bracket{\varphi}{x}\bracket{x}{\psi}\alpha(x) + \sum \bracket{\varphi}{x}\bracket{x}{\psi}\beta(x)
\end{equation}
where the sums are over all $n$-bit strings $x$. Defining the functions $F$ and $G$ by
\begin{equation}
F(x) = \frac{\bracket{\varphi}{x}\bracket{x}{\psi}}{p_x}\alpha(x), \;\;\;
G(x) = \frac{\bracket{\varphi}{x}\bracket{x}{\psi}}{q_x}\beta(x)
\end{equation}
we have $\bracket{\varphi}{\psi}=\langle F \rangle + \langle G \rangle$, where $\langle F \rangle=\sum p_x F(x)$ and $\langle G \rangle=\sum p_x G(x)$.
It follows from the query property (\Cref{def:sct}, \Cref{def:sct:query}) of CT states, that $F$ and $G$ can be evaluated in time $O(c)$.
Furthermore, both $|F(x)|$ and $|G(x)|$ are not greater than $1$. It thus follows from \Cref{thm:chernoff}, that both $\langle F \rangle$ and $\langle G \rangle$ can be approximated with accuracy $\varepsilon/2$ and error probability at most $\delta/2$ by estimating the averages over samples from the distributions $p_x$ and $q_x$, respectively. More precisely, let $X_i$, $1 \leq i \leq T$, be samples drawn from distribution $\{p_x\}$  with $T=\frac{16}{\varepsilon^2} \log(\frac{8}{\delta})$, and let $\mu_F=\frac{1}{T}\sum_{i=1}^{T} F(X_i)$, (and similarly for samples $Y_i$ drawn from $\{q_x\}$, $\mu_G=\frac{1}{T}\sum_{i=1}^{T} G(Y_i)$), then it follows from \Cref{thm:chernoff} that
\begin{eqnarray}
\Pr\left\{\left|\mu_F - \langle F \rangle \right| \leq \varepsilon/2 \right\} \geq 1 - \delta/2 \\
\Pr\left\{\left|\mu_G - \langle G \rangle \right| \leq \varepsilon/2 \right\} \geq 1 - \delta/2
\end{eqnarray}
Thus we conclude that $\bracket{\varphi}{\psi}$ can be approximated by $\mu=\mu_F+\mu_G$ in time $O(\frac{s+c}{\varepsilon^2} \log(\frac{4}{\delta}))$ such that
\begin{equation}
\Pr\left\{\left|\mu - \bracket{\varphi}{\psi} \right| \leq \varepsilon \right\} \geq 1 - \delta
\end{equation}
\end{proof}

The proof of lemma \ref{thm:partial overlap} is obtained by noting that any partial overlap of $n$-qubit CT states (as considered in lemma \ref{thm:partial overlap}) can be re-expressed (via a poly$(n)$ time classical reduction) as a complete overlap $\langle\phi|\phi'\rangle$ where $|\phi\rangle$ and $|\phi'\rangle$ are CT states on $O(n)$ qubits. Invoking lemma \ref{thm:overlap} then proves the result.

\end{document}